    \theoremstyle{plain}
    \newtheorem{theorem}{Theorem}
    \newtheorem{proposition}[theorem]{Proposition}
    \theoremstyle{definition}
    \newtheorem{definition}{Definition}[section]
    \theoremstyle{remark}
    \newcommand{\vertiii}[1]{{\left\vert\kern-0.25ex\left\vert\kern-0.25ex\left\vert #1
    \right\vert\kern-0.25ex\right\vert\kern-0.25ex\right\vert}}
    \newcommand{\hilb}{\mathcal{H}}
    \newcommand{\spX}{\mathcal{H}}
    \newcommand{\spY}{\mathcal{Y}}
    \newcommand{\UCh}{\mathcal{U}}
    \newcommand{\id}{\mathbb{I}}
    \newcommand{\complex}{\mathbb{C}}
    \newcommand{\trans}[1]{#1^\mathsf{T}}
    \newcommand{\btheta}{\vb*{\theta}}
    \newcommand{\chst}{C_{\mathrm{HST}}}
    \newcommand{\csep}{C_D}
\DeclarePairedDelimiter\ceil{\lceil}{\rceil}
\DeclarePairedDelimiter\floor{\lfloor}{\rfloor}
    \newcommand{\appone}{Measurements and State Preparation}
    \newcommand{\apptwo}{Parameter Shift Rule}
\begin{document}
\title{Variational Quantum Circuit Decoupling}
\author{Ximing Wang}
\email{canoming.sktt@gmail.com}
\affiliation{
Nanyang Quantum Hub, School of Physical and Mathematical Sciences, Nanyang Technological University, Singapore
}
\author{Chengran Yang}
\email{yangchengran92@gmail.com}
\affiliation{
Centre for Quantum Technologies, National University of Singapore, Singapore
}
\author{Mile Gu}
\email{mgu@quantumcomplexity.org}	
\affiliation{
Nanyang Quantum Hub, School of Physical and Mathematical Sciences, Nanyang Technological University, Singapore
}
\affiliation{
Centre for Quantum Technologies, National University of Singapore, Singapore
}

\affiliation{MajuLab, CNRS-UNS-NUS-NTU International Joint Research Unit, UMI 3654, Singapore 117543, Singapore}
\date{\today}

\begin{abstract}
    Decoupling systems into independently evolving components has a long history of simplifying seemingly complex systems. They enable a better understanding of the underlying dynamics and causal structures while providing more efficient means to simulate such processes on a computer. Here we outline a variational decoupling algorithm for decoupling unitary quantum dynamics -- allowing us to decompose a given $n$-qubit unitary gate into multiple independently evolving sub-components. We apply this approach to quantum circuit synthesis - the task of discovering quantum circuit implementations of target unitary dynamics. Our numerical studies illustrate significant benefits, showing that variational decoupling enables us to synthesize general $2$ and $4$-qubit gates to fidelity that conventional variational circuits cannot reach.
\end{abstract}

\maketitle

When studying a system of coupled harmonic oscillators, a key approach is to decouple its dynamics into various normal modes. We can then study each mode individually, enabling a deeper understanding of underlying dynamics. In the era of computer simulation, this divide-and-conquer approach has further operational importance, enabling leverage of parallel processing, and reducing the computational costs of simulation or optimization~\cite{pingali2011,trinder1998}. Decoupling complex systems into simpler components has seen success in diverse settings, from modeling the motion of human hands to hydraulic simulation~\cite{wu1999,wang2018}.

    \begin{figure} [t]
        \centering
        \includegraphics[width=\columnwidth]{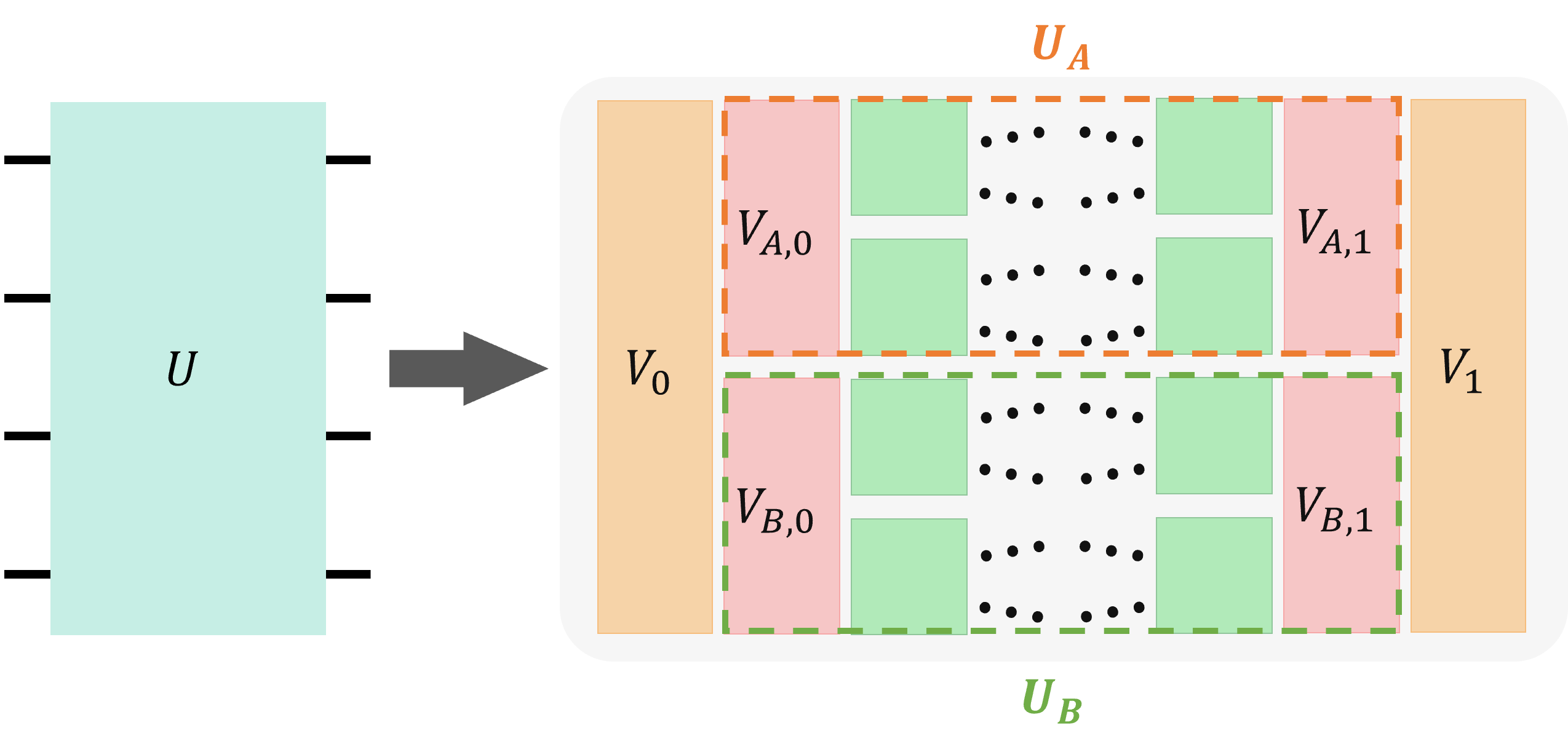}
        \caption{\textbf{Variational Quantum Circuit Decoupling.} Given a target complex unitary quantum operation $U$, the decoupling of $U$ involves identifying shallow pre and post-processing operation $V_0$ and $V_1$, such that $U \approx V_1(U_A\otimes U_B)V_0$, for some operators $U_A$ and $U_B$ act locally on subsystems $A$ and $B$ respectively.
        This procedure can be repeated recursively, breaking down $U_A$ and $U_B$ into smaller components, until all subsystems are sufficiently small.
        Take a $4$-qubit system for example, the target unitary operator $U$ is first decomposed by $V_0$ and $V_1$ into local operators on subsystems $A,B$, and the local operator $U_A$ ($U_B$) is further decomposed by $V_{A,0}$ and $V_{A,1}$ ($V_{B,0}$ and $V_{B,1}$) into single qubit unitary operators.
        }
        \label{fig:task}
    \end{figure}

\emph{Can decoupling also help us in the discovery of simpler quantum circuits}? Quantum computing promises immense computational benefits - from solving classically intractable problems to efficiently simulating quantum interactions. Yet, such tasks typically require the synthesis of complex $n$-qubit unitary operations using elementary quantum gates. While systematic decomposition techniques exist, all such generic solutions require gate counts that scale exponentially with $n$. To reduce such intractable realizations into a shallow circuit - even when possible - remains a highly intractable (NP-Complete) task~\cite{botea2018,kusyk2021}. Thus, quantum circuit compilation - the capacity to design a simpler circuit for implementing a given unitary process - is a pressing task; motivated by recent variation algorithms for quantum circuit discovery ~\cite{khatri2019,nemkov2023}.

Here, we proposed variation circuit decoupling -  a divide-and-conquer approach that enhances such existing variational circuit algorithms~\cite{khatri2019,nemkov2023}. The procedure involves designing a variational decoupling algorithm that first breaks a many-qubit interaction into smaller components (see figure~\ref{fig:task}). We can then optimize these smaller components individually or break them down further via recursive applications of variational decoupling. We demonstrate the method for compiling circuits approximating arbitrary two and four-qubit gates, where it identifies circuits of much higher fidelities than direct variational methods. 

\noindent\textbf{Framework --}\label{sec:decoupling} Here, we adopt the premise of variational quantum circuit discovery, where we are given black-box access to some unknown $n$-qubit unitary quantum process $U$~\cite{khatri2019,nemkov2023}. This could represent a complex quantum circuit we wish to simplify; some unknown quantum device we wish to reverse-engineer, or some natural quantum process we wish to simulate. Our goal is to approximate this process with a sequence of elementary quantum gates - subject perhaps to constraints on circuit complexity or depth. 

Systematic solutions to this problem are clearly intractable. Process tomography to determine the matrix elements of $U$ alone would scale exponentially with $n$, as would any general method of decomposing $U$ into elementary gates. Quantum-aided variational approaches aid in accelerating the process by use of quantum computers themselves~\cite{khatri2019,nemkov2023}. They typically involve two components: (a) An ansatz, a family of quantum circuits $W(\btheta)$ parameterized by some $k$-dimensional vector $\btheta = (\theta_0,\dots,\theta_{k-1})$, and (b) A cost function $C$ that that measures the performance loss of each candidate circuit in approximating $W$. The goal is to identify the parameter set $\btheta$ so that the corresponding circuit $W(\btheta)$ has a minimal cost. $\btheta$ corresponds to easily adjustable parameters of some proposed circuit architectures (e.g. magnitude of rotations on various single-qubit gates), such that knowledge of $\btheta$ allows the physical synthesis of the associated quantum circuit $W(\vb*{\theta})$. Meanwhile, the cost function is generally chosen to be normalised (takes a maximum value of $1$) with the following properties:
\begin{itemize}
    \item \textbf{Measurable and Gradient Measurable}, such that \(C(W)\) and each partial derivative \(\pdv{\theta_i}C\)
    can be efficiently estimated. i.e., they each can be determined to target precision $\varepsilon$ using at most $\text{poly}(n,1/\varepsilon)$ calls to $W$ and classical computing costs that scale efficiently with $n$.
    \item \textbf{Faithful}, such that \(C\) is non-negative, and \(C(W) = 0\) if and only if no general $n$-qubit quantum circuit are preferred over \(W\).
\end{itemize}
For example, $1-\bar{F}(\hat{U}, U)$ is a natural candidate for the cost function~\cite{khatri2019}, where $\bar{F}(\hat{U}, U)$ is the gate fidelity of the synthesize unitary $\hat{U}$ with respect to the target unitary $U$. This then allows gradient descent-based methods to find $\btheta$ within the parameterized circuit to minimize the cost (e.g. Adam~\cite{kingma2017}). 
However, the rate at which variational circuits converge can become severely limited due to barren plateaus - especially in scenarios where circuits being optimized over had no constraints \cite{mcclean2018,wang2021a,cerezo2021a}.

\textbf{Variational Decoupling} -- We adopt a divide-and-conquer approach. Instead of attempting to learn a circuit decomposition for the entirety of $U$, we break the circuit down into a series of smaller circuits as illustrated in figure~\ref{fig:task}. Specifically we first divide $n$-qubit system into two subsystems $A$ and $B$ of approximately equal size (i.e., of size $\floor{n/2}$ and $\ceil{n/2}$). We then identify pre-processing unitary $V_0$ and post-processing unitary $V_1$ such that  $V_1^\dagger UV_{0}^\dagger$ is most closely approximated by local unitary evolution on $A$ and $B$. That is, identify 
$V_0$ and $V_1$ such that there exists $U_A$ on $A$ and $U_B$ on $B$ in which $V_{1}^\dagger UV_{0}^\dagger$ most closely approximates $U_A \otimes U_B$.

We document below a variational means to learn gate decomposition for $V_0$ and $V_1$ without needing any knowledge of unitary operators $U_A$ and $U_B$. After which, we may repeat the decoupling method on each of the localized unitary gates, breaking each of them down into two smaller gates on $\approx n/4$ qubits. This can proceed until all localized operators are sufficiently small for conventional variational methods (e.g. a single qubit). Finally, various sub-problems are recombined to give a full circuit decomposition of $U$.

The key behind the variational decoupling algorithm is to find an appropriate cost function. First, let $V_0(\vb*{\theta}_0)$ and $V_1(\vb*{\theta}_1)$ be the parametrizations of the circuits $V_0$ and $V_1$ with respective parameters $\btheta_1$ and $\btheta_2$. Meanwhile set $W = V_{1}^\dagger UV_{0}^\dagger$ as the resulting quantum process formed by pre and postprocessing of $U$. Since we can always engineer $W$ given $U$, $V_0$, and $V_1$, we can define our cost function concerning $W$ alone for convenience. Here we desire a normalized cost function that (i) is faithful, such that $C(W) = 0$ when $U = V_{1} UV_{0}^{\dag} = U_A \otimes U_B$ or $(U_A \otimes U_B) S$, and (ii) exhibits both self and gradient measurability. Here we propose the cost function as the decoupling cost $\csep$ defined as
 \begin{equation} \nonumber 
        \begin{aligned}
        \csep(W)
        &= \iint D[
            W(\ketbra{\psi}_{A}\otimes\ketbra{\phi}_{B})W^\dagger
        ] \dd \ket{\psi} \dd \ket{\phi},
        \end{aligned}
 \end{equation}
where $D(\rho_{A, B}) = \frac{4^{m}}{(4^{m}-1)}\frac{L(\rho_A) + L(\rho_B)}{2}$, $L(\rho) = 1-\Tr[\rho^2]$ is the linear entropy and the integration is take over Haar random initial states $\ket{\psi}_{A}$ on $A$ and $\ket{\Phi}_{B}$ on $B$.
Here $m$ is either the number of qubits in $A$ or $B$, whichever is smaller.
The normalization factor $\frac{4^{m}}{4^{m}-1}$ is chosen such that $D(\rho_{A,B})$ ranges from $0$ to $1$ for all pure state $\rho_{A,B}$.
We now illustrate the following:

\begin{theorem}
        The decoupling cost $\csep$ is efficiently measurable, efficiently differentiable, and faithful.
\end{theorem}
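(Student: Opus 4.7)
The plan is to verify the three claims (faithfulness, efficient measurability, efficient differentiability) in turn, with faithfulness being the substantive part and the other two following from standard primitives. For the easy direction of faithfulness, $W = U_A \otimes U_B$ maps $\ket{\psi}_A\ket{\phi}_B$ to the pure product $(U_A\ket{\psi})_A \otimes (U_B\ket{\phi})_B$, whose reductions are both pure, so $L(\rho_A) = L(\rho_B) = 0$ pointwise and $\csep(W) = 0$. Conversely, since $D[\cdot]\ge 0$, the equality $\csep(W) = 0$ forces the integrand to vanish for Haar-almost every product input and, by continuity of $L$ in the input, for \emph{every} product input; hence $W$ sends every product pure state to a product pure state. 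I would then invoke the standard structure theorem: fixing $\ket{\phi}_B$ and writing $W\ket{\psi_1}_A\ket{\phi}_B = \ket{\alpha_1}_A\ket{\beta_1}_B$ and $W\ket{\psi_2}_A\ket{\phi}_B = \ket{\alpha_2}_A\ket{\beta_2}_B$ for linearly independent $\ket{\psi_1},\ket{\psi_2}$, the requirement that $W(a\ket{\psi_1}+b\ket{\psi_2})_A\ket{\phi}_B$ also be a product (i.e.\ Schmidt rank one for every $a,b$) forces either $\ket{\alpha_1}\propto\ket{\alpha_2}$ or $\ket{\beta_1}\propto\ket{\beta_2}$. A short consistency check using unitarity of $W$ rules out mixing the two branches across different pairs $(\ket{\psi_i},\ket{\psi_j})$; extending by linearity in $\ket{\psi}$ and continuity in $\ket{\phi}$ then pins down $W = U_A\otimes U_B$ in one branch and $W = (U_A\otimes U_B)S_{AB}$ in the other, the latter only possible when $|A|=|B|$ so that $S_{AB}$ is well-defined.

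For efficient measurability, I would exploit that $D[\sigma]$ is degree-two in $\sigma$, hence degree-two in each of $\ketbra{\psi}{\psi}_A$ and $\ketbra{\phi}{\phi}_B$. This allows the Haar integrals to be replaced by exact expectations over state $2$-designs on $A$ and $B$, which admit polynomial-size preparation (e.g.\ via random Clifford circuits). Given a sampled input, two copies of $W(\ket{\psi}_A\ket{\phi}_B)$ are prepared and a SWAP test between the $A$-subsystems yields an unbiased estimator of $\Tr[\rho_A^2]$, and likewise for $B$. Each round uses $O(1)$ calls to $W$ and returns a bounded outcome; Hoeffding concentration combined with linearity of the estimator gives a sample complexity of $O(1/\varepsilon^2)$ to determine $\csep(W)$ to additive precision $\varepsilon$, with trivial classical post-processing, meeting the $\mathrm{poly}(n, 1/\varepsilon)$ criterion.

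For efficient differentiability, I would invoke the parameter-shift rule: since $V_0(\btheta_0)$ and $V_1(\btheta_1)$ are assumed to be built from parameterised Pauli rotations, every expectation value entering the estimator above obeys $\partial_{\theta_i}\langle O\rangle_{\btheta} = \tfrac{1}{2}(\langle O\rangle_{\theta_i+\pi/2} - \langle O\rangle_{\theta_i-\pi/2})$; by linearity of expectation and of the Haar (or $2$-design) integral, the same shift rule lifts to $\csep$ itself. Each gradient component is then estimable with the same $\mathrm{poly}(n,1/\varepsilon)$ resources as $\csep$. I expect the hardest step to be the faithfulness argument, in particular promoting the pointwise branch choice (direct-product vs.\ swap form) into a single globally consistent factorisation and carefully handling the boundary case $|A|=|B|$ where the swap branch becomes available.
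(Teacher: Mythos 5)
Your treatment of measurability and faithfulness is sound and broadly parallels the paper's. For measurability the paper does essentially what you propose, but in a concrete form: because the integrand is degree two in each input, the Haar average of $\ket{\psi}\!\bra{\psi}^{\otimes 2}$ is replaced by the single state $\tau=\Pi_S/\tr[\Pi_S]$ (prepared from random computational-basis strings plus Hadamard/CNOT layers), and the purity is read out with a \emph{destructive} swap test, i.e.\ pairwise Bell measurements with linear-time parity post-processing rather than a controlled-SWAP; your 2-design/SWAP-test variant achieves the same complexity. For faithfulness you actually supply more detail than the paper, which only asserts that $\csep(W)=0$ iff $W$ is a product of local unitaries or such a product composed with the swap; your Schmidt-rank-one linearity argument and the branch-consistency step are the right way to substantiate that claim, and you correctly flag the $|A|=|B|$ caveat for the swap branch.

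The genuine gap is in the differentiability argument. The circuit that estimates $\csep$ necessarily contains \emph{two} copies of $W(\btheta)$ (one per copy of the output state entering the swap test), so each Pauli rotation $U(\theta_i)$ appears twice, with effective generator $H\otimes\id+\id\otimes H$. This operator is not idempotent, and $\csep$ is a degree-two trigonometric polynomial in $\theta_i$, so the two-term rule
$\pdv{\theta_i}\langle O\rangle=\tfrac{1}{2}\bigl(\langle O\rangle_{\theta_i+\pi/2}-\langle O\rangle_{\theta_i-\pi/2}\bigr)$
that you invoke is simply false for these expectation values: linearity of the Haar integral cannot rescue it because the failure is already at the level of a single expectation value. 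The paper handles this by applying the product rule to the two occurrences separately, yielding a four-term rule
\begin{equation*}
\pdv{\theta_i}\csep=\tfrac{1}{2}\csep(W_\alpha^+,W_\beta)-\tfrac{1}{2}\csep(W_\alpha^-,W_\beta)+\tfrac{1}{2}\csep(W_\alpha,W_\beta^+)-\tfrac{1}{2}\csep(W_\alpha,W_\beta^-),
\end{equation*}
where only one of the two copies is shifted in each term. To repair your proof you must either adopt this independent-shift construction (which requires the ability to set the two copies' parameters to different values) or use a generalized shift rule with four evaluation points adapted to the frequency-two spectrum; the efficiency conclusion then follows as you intended.
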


We illustrate in figure~\ref{fig:Procedure} (c) that $\csep$ is \emph{efficiently measurable}. Specifically, the circuit involves $2n$ qubits, a constant-depth overhead from implementing $V_1^\dagger UV_{0}^\dagger$ Note that while $C_D$ is defined using Haar random initial states, measurement of $C_D$ requires only preparation of random states in the computational basis (see supplementary material~\emph{\appone} for proof).

We establish the \emph{efficiently differentiability} of $\csep$ by generalising the parameter-shift rule -- a noise-resilient, hardware efficient method, and provides exact gradients through quantum measurement~\cite{mitarai2018,schuld2019}.
As the parameterized circuit $W$ appears twice in evaluating the decoupling cost $\csep$, so does each parameterized gate in $W$, the conventional parameter-shift rule cannot be applied directly.
A parameter-shift rule addressing multiple occurrences of parameters needs to be considered~\cite{li2017,lu2017}.
We use $W_\alpha$ and $W_\beta$ to specify each of the two copies of $W$ used to evaluate $\csep(W) = \csep(W_\alpha, W_\beta)$ as in figure~\ref{fig:Procedure}.
Given the ansatzes whose parameterized gates are only Pauli rotations, we show that the gradient $\pdv{\theta_i} \csep$ for each $\theta_i$ in $\vb*{\theta}$ can be written down as
\begin{equation}\label{eq:gradient}
    \begin{aligned}
    \pdv{\theta_i} \csep =&
    \frac{1}{2} \csep(W_\alpha^+,W_\beta) - \frac{1}{2} \csep(W_\alpha^-,W_\beta)\\
    &+ \frac{1}{2} \csep(W_\alpha,W_\beta^+) - \frac{1}{2} \csep(W_\alpha,W_\beta^-) \ .
    \end{aligned}
\end{equation}
where $W^+_i$ and $W^-_i$ are the ansatz with the parameter $\theta_i$ in $W(\vb*{\theta})$ updated to $\theta_i+\frac{\pi}{2}$ and $\theta_i-\frac{\pi}{2}$ respectively (see supplementary materials Sec. III).
Each term in~\eqref{eq:gradient} can then be directly evaluated by the circuit in figure~\ref{fig:Procedure}, leading to the efficient measurement of the gradient.

To establish \emph{faithfulness}, first observe that the linear entropy is used to quantify the entanglement between two subsystems~\cite{santos2000}. In particular, the non-negativity of the linear entropy guarantees the non-negativity of the decoupling cost $\csep$. Meanwhile, $D(W (\rho_A\otimes \rho_B)W^\dagger)$ quantifies the entanglement $W$ generates between $A$ and $B$ when applied to a product state $\rho_{AB} = \rho_A\otimes \rho_B$. Consequently, our cost function $\csep$ represents the expected entanglement generated by $W$ when averaged across all pure product inputs $\ket{\psi}_A \otimes \ket{\phi}_B$. As such, $C_D(W) = 0$ if and only if the operator $W$ is equivalent to the identity or the swap gate up to local unitary gates. In addition, in the supplementary material Sec.IV, we prove that $C_D$ constrains the ultimate gate fidelity we can achieve after decoupling:
\begin{theorem}
    For a given $n$-qubit unitary operator $W$, the gate fidelity $\bar{F}(U_A\otimes U_B,W)$ for all $U_A$ and $U_B$ is bounded by the decoupling cost $\csep$, such that
    \begin{equation}
        \bar{F}_{\max}^2 \leq  \mathrm{min} (1- \csep + \frac{3}{2^n+1}, 1) \ .
    \end{equation}
\end{theorem}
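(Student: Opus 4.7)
The strategy is a two-step reduction: first absorb the optimal local unitaries into $W$ so that the product-unitary overlap becomes an overlap with the identity, then combine a Jensen-type inequality on the fidelity formula with a Schmidt bound relating product-state overlaps to reduced purities, which in turn is controlled by $\csep$.

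Concretely, let $V^{\star}=U_{A}^{\star}\otimes U_{B}^{\star}$ attain the maximum in $\bar{F}_{\max}$, and set $M:=V^{\star\dagger}W$. Because $\csep$ is invariant under pre- and post-composition by product unitaries (local unitaries on $A$ preserve $\Tr[\rho_{A}^{2}]$, and pre-composition is absorbed into the Haar-invariant measure on product inputs), $\csep(M)=\csep(W)$, while $\bar{F}_{\max}=\bar{F}(I,M)=\frac{|\Tr[M]|^{2}+d}{d(d+1)}$. The identity $\Tr[M]=d\,\mathbb{E}_{\psi,\phi}[\langle\psi\phi|M|\psi\phi\rangle]$, followed by Cauchy--Schwarz applied twice (once as $|\mathbb{E}[x]|^{2}\leq\mathbb{E}[|x|^{2}]$, then as $\mathbb{E}[p]^{2}\leq\mathbb{E}[p^{2}]$ combined with $\mathbb{E}[p]\leq 1$), gives
\begin{equation*}
\bar{F}_{\max}^{2}\;\leq\;\frac{d^{2}\,\mathbb{E}[p^{2}]+2d+1}{(d+1)^{2}},
\end{equation*}
with $p:=|\langle\psi\phi|M|\psi\phi\rangle|^{2}\in[0,1]$ and $d=2^{n}$.

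The key pointwise inequality is $p^{2}\leq\Tr[\rho_{A}^{2}]$, where $\rho_{A}$ is the $A$-marginal of $M|\psi\phi\rangle\langle\psi\phi|M^{\dagger}$: for any pure state $|\Psi\rangle_{AB}$ and product $|\alpha\beta\rangle$, $|\langle\alpha\beta|\Psi\rangle|^{2}\leq\lambda_{1}(|\Psi\rangle)\leq\sqrt{\Tr[\rho_{A}^{2}]}$ by the Schmidt-coefficient comparison $\lambda_{1}^{2}\leq\sum_{k}\lambda_{k}^{2}$; squaring and applying to $|\Psi\rangle=M|\psi\phi\rangle$ gives the claim. Taking expectation and using $\Tr[\rho_{A}^{2}]=\Tr[\rho_{B}^{2}]$ for pure $\rho_{AB}$ together with the definition of $\csep$, $\mathbb{E}[p^{2}]\leq 1-\frac{4^{m}-1}{4^{m}}\csep$. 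Substituting yields $\bar{F}_{\max}^{2}\leq 1-C\csep$ with $C:=\frac{d^{2}(4^{m}-1)}{4^{m}(d+1)^{2}}$; rewriting as $\bar{F}_{\max}^{2}\leq 1-\csep+(1-C)\csep\leq 1-\csep+(1-C)$ using $\csep\leq 1$, the theorem reduces to the arithmetic check $1-C\leq 3/(d+1)$, combined with the trivial $\bar{F}_{\max}^{2}\leq 1$ for the $\min$.

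I expect the main obstacle to be this final arithmetic step---tracking exactly how the $\frac{4^{m}}{4^{m}-1}$ normalization in the definition of $\csep$ combines with the $(d+1)^{2}$ denominator in the fidelity formula to produce precisely the additive correction $3/(d+1)$. For the balanced split used in the protocol ($4^{m}\approx d$), the constants work out cleanly; a strongly unbalanced partition would require a sharper intermediate bound, perhaps tightening the Schmidt inequality using the $4$-design formula for $\mathbb{E}[p^{2}]$ rather than the crude $\sqrt{\cdot}$ bound.
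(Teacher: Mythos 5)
Your route is genuinely different from the paper's. The paper works in the Choi picture: it Schmidt-decomposes the Choi state of the target, bounds $\max_{U_A,U_B}\abs{\langle U_A\otimes U_B,U\rangle}^2\le d\norm{p}_\infty\le d\norm{p}_2$, and then relates $\norm{p}_2^2$ to the average linear entropy through an eight-term expansion of $\tr[\UCh^{\otimes2}(\tau_A\otimes\tau_B)(S_A\otimes\id_B+\id_A\otimes S_B)]$ into swap-operator traces over different partitions of the Choi systems. You stay in the state picture: absorb the optimal locals into $M$, write $\Tr[M]$ as $d$ times a product-state average, apply Cauchy--Schwarz twice, and control $\mathbb{E}[p^2]$ via the pointwise Schmidt bound $\abs{\langle\psi\phi|M|\psi\phi\rangle}^4\le\Tr[\rho_A^2]$. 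Every step of that chain is correct, and your intermediate inequality $\bar{F}_{\max}^2\le 1-\frac{d^2}{(d+1)^2}\frac{4^m-1}{4^m}\csep$ is in fact pointwise at least as strong as the analogue the paper derives (the paper pays an extra $\frac{d}{(d+1)^2}(1-K\csep)$ by bounding $(\bar{F}_{\max}-\frac{1}{d+1})^2$ from below with $\bar{F}_{\max}\le1$), while avoiding all of the Choi-operator bookkeeping.

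The gap is where you predicted, but it is more serious than ``strongly unbalanced partitions'': it bites already for the standard balanced split of an odd number of qubits. Your final step needs $1-C\le\frac{3}{d+1}$ with $C=\frac{d^2(4^m-1)}{4^m(d+1)^2}$, which is equivalent to $d^2\le 4^m(d+2)$. For even $n$ the split gives $4^m=d$, so $1-C=\frac{3d+1}{(d+1)^2}\le\frac{3}{d+1}$ and you are done. For odd $n$, however, $m=\floor{n/2}$ gives $4^m=d/2$ and $1-C=\frac{4d+1}{(d+1)^2}>\frac{3}{d+1}$ whenever $d>2$: already at $n=3$ you get $\frac{11}{27}>\frac{1}{3}$, and asymptotically your chain only delivers $\bar{F}_{\max}^2\le 1-\csep+\frac{4}{d+1}$. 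Replacing the crude $\csep\le1$ by the true maximum $\frac{2^m}{2^m+1}$ rescues $n=3$ but still fails from $n=5$ onward. So as written your argument proves the theorem only for even $n$; note the paper's own derivation reaches the constant $3$ by replacing $K=\frac{4^m-1}{4^m}$ with $\frac{d^2-1}{d^2}$, a substitution that goes in the right direction only when $4^m=d$, so the parity issue is not specific to your route. To close it you would need to recover slack elsewhere---for instance by not discarding the cross term $2d\,\mathbb{E}[p]$ so generously, or by retaining the analogue of the paper's $P_3$ term (the Schmidt weight of the swapped partition) instead of dropping it.
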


We can then proceed with variational decoupling by choosing appropriate ansatzes for $V_0(\vb*{\theta}_0)$ and $V_1(\vb*{\theta}_1)$; such that $W$ is parameterized by the joint vector $\vb*{\theta}^* := \vb*{\theta}_0^*\oplus \vb*{\theta}_1^*$. For each candidate $V_0(\vb*{\theta}_0)$ and $V_1(\vb*{\theta}_1)$, we can construct a realization of $W$, and use above methods to efficiently measure $\csep$ and its various gradients, and thus employ gradient-descent-based optimization method to find suitable $V_0$ and $V_1$ that decouples $U$. Note that, just as in standard circuit compilation, we may also choose to limit the gate depth of $V_0$ and $V_1$ should we wish for shallow circuit approximations of $U$.

    \begin{figure}[htp]
    \centering
        \includegraphics[width=\columnwidth]{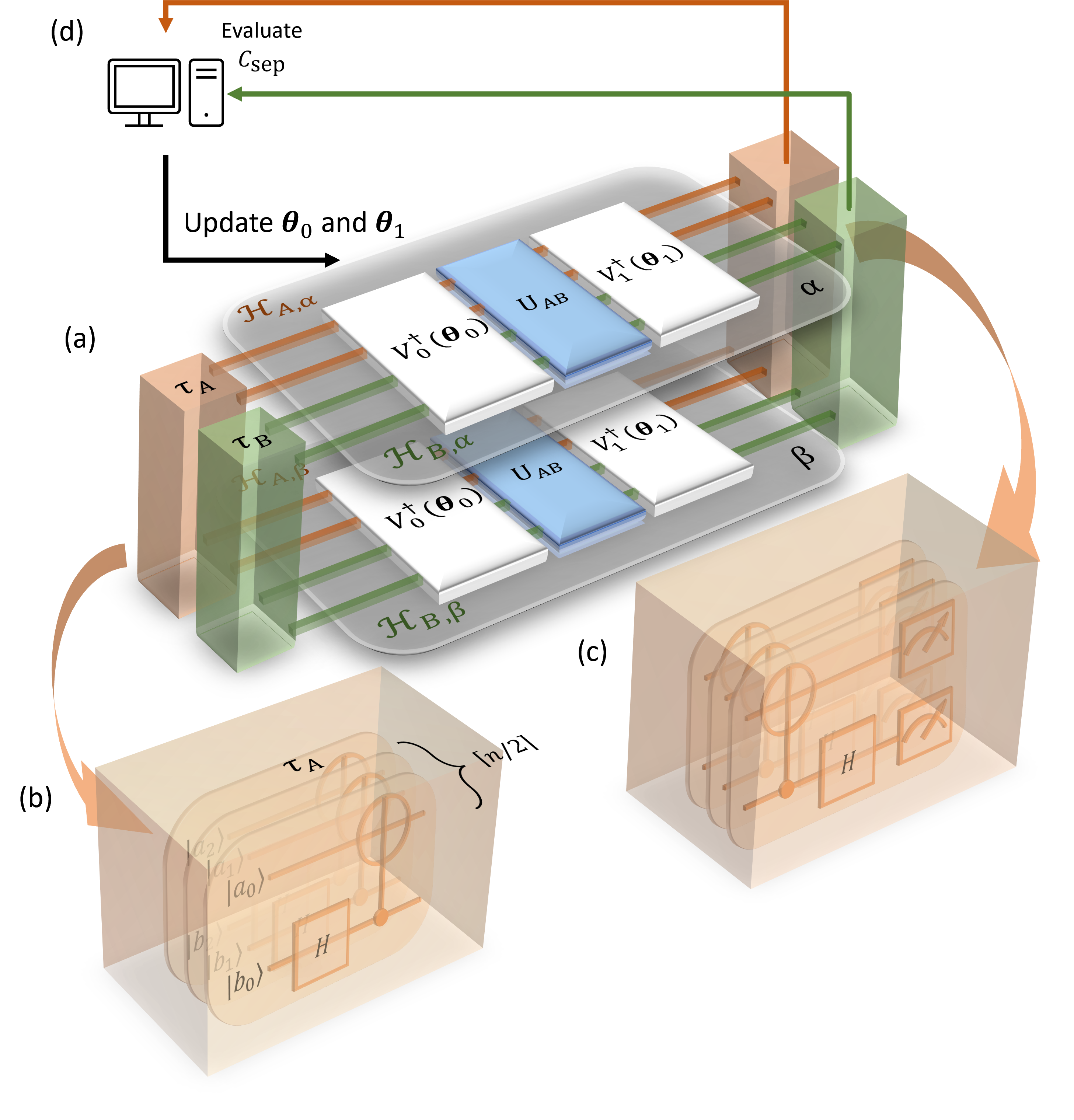}
        \caption{\emph{The variational decoupling algorithm.} To decouple an $n$-qubit circuit, our algorithm begins with two systems of $n/2$-qubits - denoted as the upper and lower layer in (a), which we label $\alpha$ and $\beta$. Each layer is further divided into the $2$ subsystems we wish to decouple, resulting in the $4$ subsystems $\spX_{A,\alpha}$, $\spX_{A,\beta}$ of dimensions $d_A$ and $\spX_{B,\alpha}$, $\spX_{B,\beta}$ of dimensions $d_B$. We apply $W= V_1^\dagger U V_0^\dagger$ to each of subsystem $\alpha$ and $\beta$.
        The input state is initialized as $\tau_A\otimes \tau_B$, where $\tau_{k} = \frac{2P_s^{(k)}}{d_{k}^2+d_{k}}$, and  $P^{(k)}_s$ is the projector to the symmetric subspace of $\spX_{k,\alpha} \otimes \spX_{k,\beta}$ for $k \in \{A,B\}$. This is done by first preparing each qubit in $\spX_{k,\alpha}$ and $\spX_{k,\beta}$ in a random computational basis state, followed by qubit-wise application of the Hadamard gate for each $\beta$-layer qubit. Each qubit is then entangled with a corresponding qubit in $\spX_{k,\alpha}$ via a CNOT gate (b). Meanwhile, we can efficiently measure cost function $\csep$ and its gradient using a destructive swap test - which similarly involves only qubit-wise CNOT and Hadamard gates, followed by projective measurement (c). This information can then be used to tune $\theta_0$ and $\theta_1$ using techniques from variational circuits (d).
        }
        \label{fig:Procedure}
    \end{figure}

\noindent\textbf{Performance Comparisons --}
We demonstrate the improved efficacy of our decoupling method in two distinct scenarios:

\begin{itemize}
\item \emph{The exact compilation of a general $2$-qubit gate} - where the ansatz is shown figure~\ref{fig:2result}. Here, the circuit family is universal, and can theoretically synthesize any $2$-qubit unitary operator~\cite{shende2004}. Our decoupling algorithm breaks this into two stages. The first (coloured orange) decouples $U$ into two single-qubit circuits. The second stage (red) then optimises these circuits. Note that in this special case, we have set $V_1 = I$, since this is already sufficient for compiling an arbitrary $2$-qubit gate.

\item \emph{A shallow circuit approximation of an arbitrary $4$-qubit gate}, where the structure of the ansatz follows the figure~\ref{fig:training_decoup4qubit}. For both $V_0$  and $V_1$, we utilize four layers of decoupling circuit, where each layer comprises a shallow circuit and single qubit and control gates. The circuit compilation is done via three stages. Stage one (orange) decouples the circuit into two generic $2$-qubit gates, $V_A$ and $V_B$. The second stage (red) decouples these 2-qubit gates further into $4$ single-qubit interactions. Finally, stage $3$ then optimizes each of the four single-qubit gates. 
\end{itemize}

In each setting, we select target $U$ at random according to the Haar measure. The former scenario represents the case where we have a universal circuit ansatz and which to exactly compile a unitary. The latter represents the scenario where we wish to use a shallow circuit with limited expressivity to approximate an arbitrary $4$-qubit gate. 

We benchmark our method against standard variational approaches, where all circuit parameters are optimized by gradient descent to maximize the gate fidelity $\bar{F}$ simultaneously.
In the standard approach, this optimization is done by minimizing a cost function $\chst$ or the localized cost function $C_{\mathrm{LHST}}$~\cite{khatri2019}.
Here $\chst = \frac{d+1}{d}(1-\bar{F})$ and $C_{\mathrm{LHST}} = \frac{1}{n}\sum_i C_{\mathrm{HST}}^{(i)}$ are faithful measures who reaches $0$ when $\bar{F}$ approaches $1$. The $C_{\mathrm{HST}}^{(i)}$ are the the cost function $\chst$ restricted to the $i$th qubit.
For a fair comparison, all methods use the same circuit parameterization and thus have identical expressivity and all are trained using the ADAM~\cite{kingma2017} method with identical hyper-parameters~\footnote{The $\alpha = 0.01$, $\beta_1 = 0.8$, $\beta_2 = 0.9$.}. The results are shown in figure~\ref{fig:2result} (2-qubit) and figure~\ref{fig:training_decoup4qubit} (4-qubit). In both cases, performance is gauged by how $\bar{F}(U,\hat{U})$ improves, i.e., the fall of $\chst$ with the number of training iterations. We see that:
\begin{itemize}
    \item In $2$-qubit gate decomposition, our decoupling approach converges a lot more quickly. For example, it achieves a fidelity of $0.9999$ using the same amount of time where conventional methods approach a fidelity of $0.999$
    \item In shallow circuit approximation of general $4$-qubit gates, our decoupling method reaches circuit fidelity around $0.7$, significantly ahead of conventional methods which achieves a fidelity of around $0.5$. Note that high fidelity values are not expected here as our ansatz - being shallow - has limited expressivity. In supplementary materials V, we also include an analysis of a third scenario where target $4$-qubit unitaries can be expressed by our shallow circuit. There our decoupling method continues can reach a circuit fidelity of $0.998$ vs $0.9$ in using conventional methods.
\end{itemize}

In all cases, conventional variational circuits exhibit steady gate fidelity gain till some saturation point. Our decoupling method, on the other hand, only minimizes $C_{\mathrm{LHST}}$ in the final stage. Previous stages concentrate on decoupling the unitary and do not immediately improve gate fidelity. However, once decoupled, convergence to a low $\chst$ occurs very quickly - more than compensating for the delayed start. Moreover, the final fidelity obtained shows marked improvement for both exact gate decomposition and shallow circuit approximation.

  \begin{figure}[htp]
      \centering
      \includegraphics[width=0.43\textwidth]{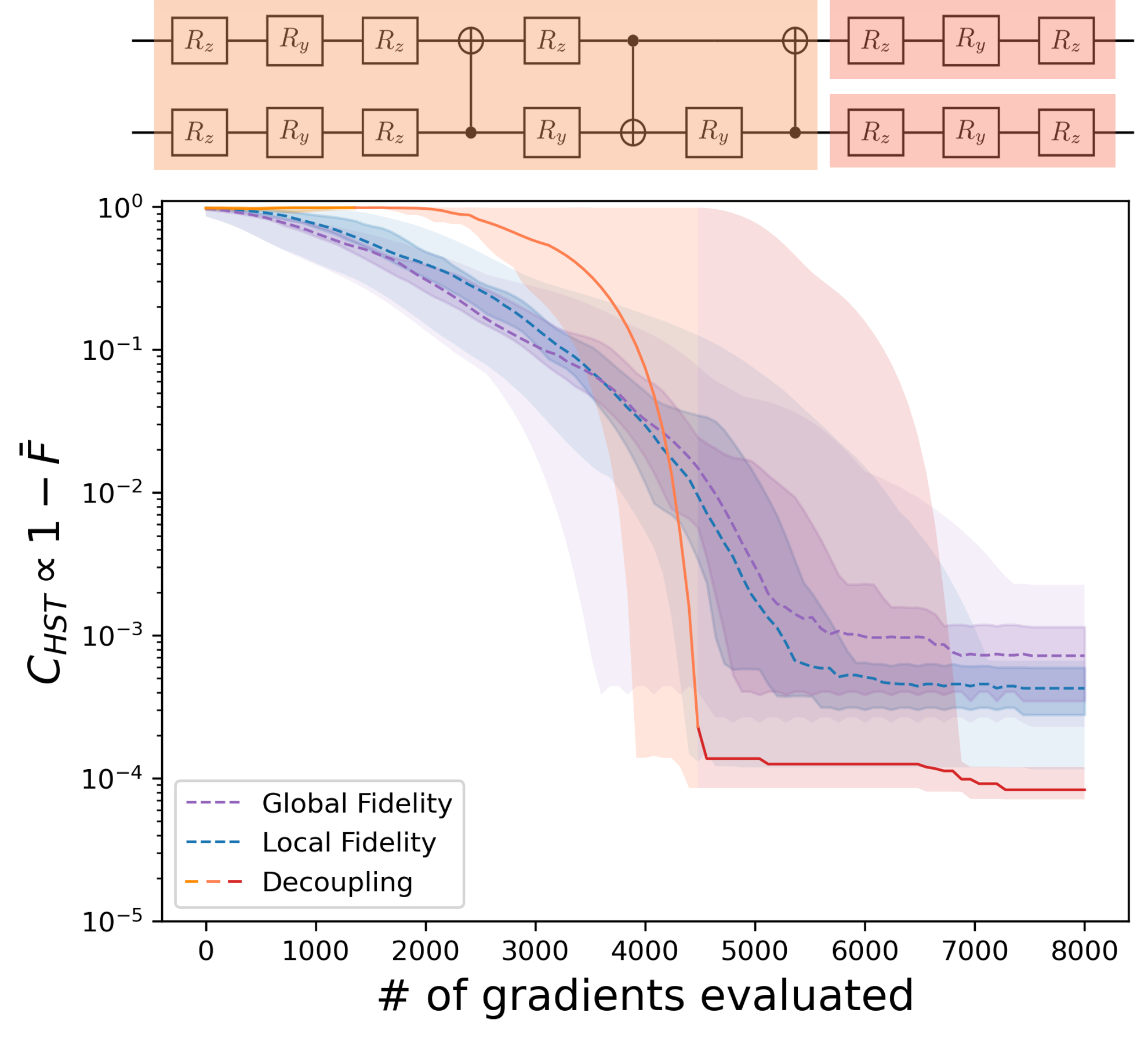}
      \caption{\emph{Variational Compilation of a general 2-qubit unitary with ADAM optimization.}
      The fidelity $\bar{F}$ is compared between the direct compilation (purple for global cost functions and blue for local cost functions) and our decoupling method (solid lines).
      For each method, the training is repeated $20$ times with different initializations.
      Our method divides the task into two steps, each represented by a different color.
      The first phase (yellow) decouples the unitary to single-qubit unitary operators, and the second phase (red) maximizes the fidelities.
      The phase transition region is colored orange, where some of the processes are in phase one and the rest are in phase two.
      The data of each method are divided into quartiles, and the lines represent the median of each process.
      The second and third quartiles are shaded with the corresponding color.
      The preprocessing ansatz is chosen to be the universal two-qubit circuit~\cite{shende2004}, and the post-processing is set to identity. We see that the average fidelity does not increase during the decoupling stage, but greatly enhances training in the second stage by breaking up the problem into more tractable components. The resulting compiled circuits have significantly higher fidelity, such that $1 - \bar{F}$ is reduced by a factor of $3$ over the state of the art.
      }
  \label{fig:2result}
  \end{figure}

  \begin{figure}[htp]
      \centering
      \includegraphics[width=0.45\textwidth]{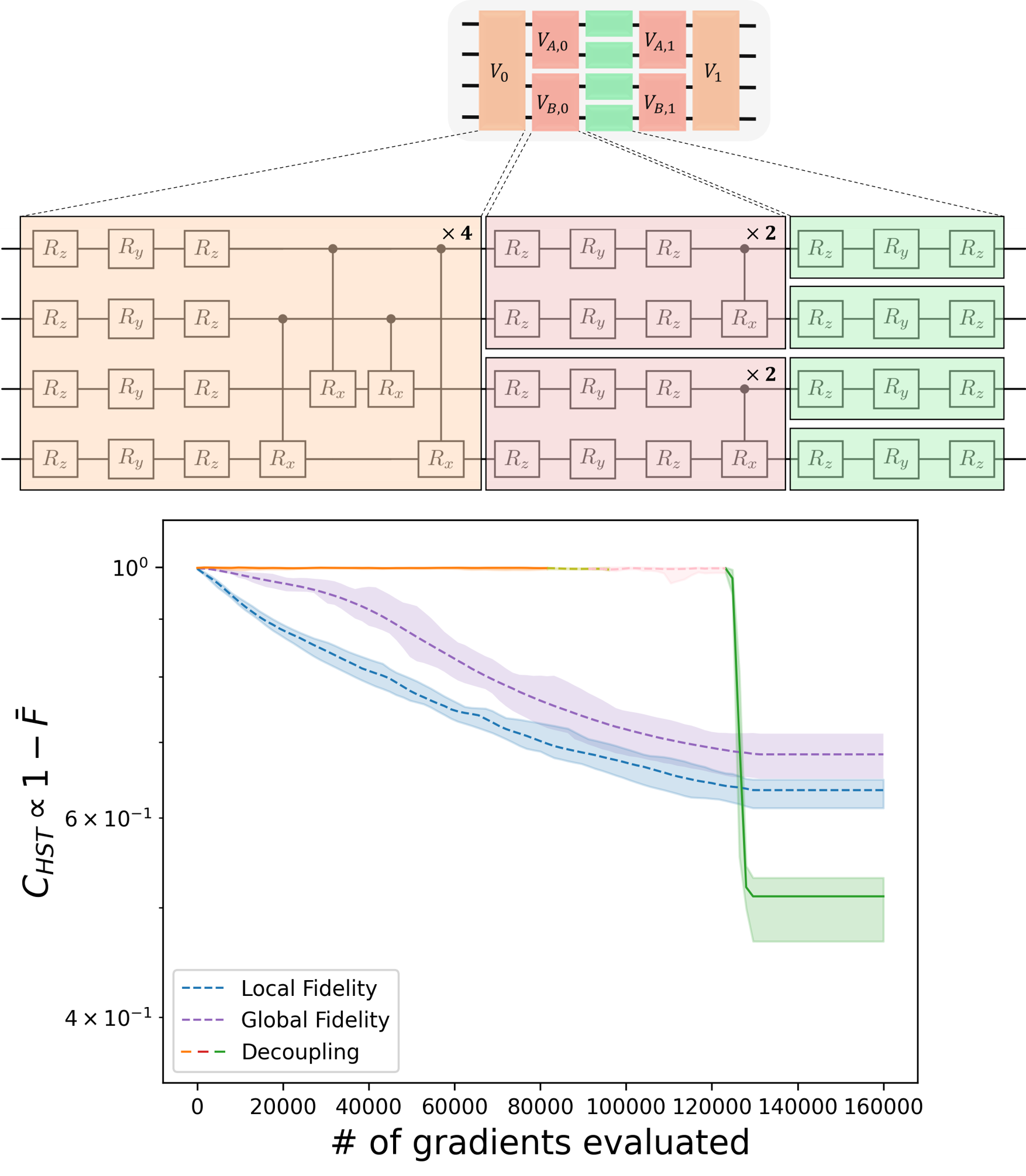}
      \caption{\emph{Variational Compilation of generic 4-qubit unitary with ADAM optimization.}
      Similar to the 2-qubit case, the data of each process is collected from $20$ independent training processes, while the regions between the first and last quartiles are shaded with the corresponding colors as in FIG~\ref{fig:2result}.
      Here, the optimisation is divided into three phases.
      The circuits are divided into 3 types: 4-qubit ansatzes (orange), 2-qubit ansatzes (red), and the middle single qubit gates (green).
      The layers are repeated for $4$ times for $V_0,V_1$, and the layers in the 2-qubit ansatzes $V_{A,0},V_{B,0},V_{A,1},V_{B,1}$ repeat twice.
      The ansatzs are designed to explore stronger expressive power with a limited number of gates, which can be generalized to any number of qubits~\cite{du2020}.
      In this case, our ansatz -- being low depth -- is non-universal. Nevertheless, we see that variation decoupling continues to achieve circuits of higher fidelity.
      }
      \label{fig:training_decoup4qubit}
  \end{figure}

    \textbf{Discussion} -- Here, we outline a variational means to decouple complex quantum circuits.
    This is enabled by the proposal of an efficiently measurable cost function $\csep$, that allows us to identify suitable pre- and post-processing circuits to decouple a $2n$-qubit unitary to a tensor product of two $n$-qubit unitary operators.
    Through the recursive application, we illustrated how variational decoupling enhances existing approaches to circuit compilation -- a key task in the design of quantum circuits in both near-term of future universal quantum computers. In benchmarks, variational decoupling enabled us to compile
    circuits realization of arbitrary two and four-qubit quantum gates to significantly higher fidelity. By setting the allowed depths of pre-processing and post-processing circuits, our decoupling method can be adapted to discovering circuits of varying expressivity, ranging from very shallow circuits for NISQ devices to deep circuits that represent general unitary operations.

    There are many promising avenues for future research. One is to consider hybrid digital-analog quantum computing paradigms~\cite{lamata2018}, where pre or post-processing blocks may include analog evolution that provides efficient means to generating ansatz of increased expressivity. Meanwhile, the decoupling of complex interactions has motivations beyond computational advantage. In classical analogs such as coupled oscillators, identification of normal modes helped better express underlying dynamics - and thus benefits intrinsic human understanding. We anticipate a similar advantage in the quantum regime. One could, for example, envision a seemingly complex many-body quantum system evolving under some joint Hamiltonian $H$. By setting $V_0 = V^\dag_1$, our methodologies could also help identify potential sub-spaces where there is minimal coupling - and thus aid the discovery of underlying causal restructure. Finally, the minimal mutual information between arbitrary bipartitions of a many-body system is considered a fundamental measure of indivisibility -- and is remarkably a lot smaller for general quantum systems due to the potential to decouple in an entangling basis ~\cite{tegmark2015}. It would certainly be interesting to understand if this phenomenon indicates that decoupling is particularly beneficial in the quantum regime.

\begin{acknowledgments}
\textit{Acknowledgments} ---  This work is supported by the National Research Foundation, Singapore, and Agency for Science, Technology and Research (A*STAR) under its QEP2.0 programme (NRF2021-QEP2-02-P06), the Singapore Ministry of Education Tier 1 Grant RG77/22 (S) and the FQXi R-710-000-146-720
Grant “Are quantum agents more energetically efficient
at making predictions?” from the Foundational Questions Institute and Fetzer Franklin Fund (a donor-advised fund of the Silicon Valley Community Foundation).
\end{acknowledgments}

\onecolumngrid
\appendix
\newpage
{\hfill\centering \LARGE\textbf{Supplementary Material}\hfill}
\vspace{6em}

\section{Preliminary - Quantum Channels}

In the quantum theory, the states of a quantum system are described by the density operators $\rho$ that act on a Hilbert space $\spX$, such that the operators $\rho$ are positive semidefinite and $\tr[\rho]=1$.
A linear map $\Gamma$ from a system $\spX$ to system $\spY$ is called a \emph{quantum channel} if it maps a density operator in $\spX$ to a density operator in $\spY$.
Such property of the linear maps is usually broken into two constraints: 1. the map is completely positive (CP), and 2. the map is trace-preserving (TP).
A map $\Gamma:\spX\to \spY$ is CP if, for all ancillary system $\mathcal{A}$, $\Gamma\otimes \mathbb{I}_A$ maps positive semidefinite operators $\rho\in \spX\otimes \mathcal{A}$ to positive semidefinite operators $\Gamma\otimes\mathbb{I}_A(\rho)$.
A map is TP if it preserves the trace of operators, such that $\tr[\Gamma(\rho)] = \tr[\rho]$ for all operators defined on $\spX$.
For convenience, we usually investigate these two constraints with the matrix representations of the linear maps.
Choi-Jamiolkowski operator~\cite{jamiolkowski1972,watrous2018} is one of the most commonly used representations of quantum channels.
For each quantum channel $\Gamma$, its Choi-Jamiolkowski operator $J^\Gamma$ is unique.
\begin{definition}
    The Choi-Jamiolkowski operator $J^\Gamma$ of a linear map $\Gamma: \spX\to \spY$ is defined as
    \begin{equation}
        J^\Gamma = \left(\Gamma\otimes \id_S\right) \ketbra{\Psi^+} \ ,
    \end{equation}
    where $\ket{\Psi^+} = \sum_i \ket{i}_\spX\ket{i}_{\spX^\prime}$, $\spX^\prime$ is a copy of $\spX$, and $\ket{i}$ forms a basis of $\spX$.
\end{definition}
The properties of CP and TP can be characterized easily with Choi-Jamilkowski operators~\cite{watrous2018}:
\begin{theorem}
    A linear map $\Gamma$ is a quantum channel if and only if $J^\Gamma$ is positive semidefinite (CP), and $\tr_\spY[J^\Gamma] = \mathbb{I}_\spX$ (TP).
\end{theorem}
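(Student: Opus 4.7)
The plan is to establish the CP and TP conditions separately, both by direct manipulation of the Choi-Jamiolkowski operator in a fixed orthonormal basis $\{\ket{i}\}$ of $\spX$.

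For the CP equivalence, the forward implication is immediate from the definition: if $\Gamma$ is completely positive then in particular $\Gamma \otimes \id_{\spX'}$ preserves positive semidefiniteness, and since $\ketbra{\Psi^+}$ is a rank-one positive operator, $J^\Gamma = (\Gamma \otimes \id_{\spX'})(\ketbra{\Psi^+}) \geq 0$. The converse is the substantive direction: assuming $J^\Gamma \geq 0$, I would take a spectral decomposition $J^\Gamma = \sum_k \ketbra{v_k}$ with $\ket{v_k} \in \spY \otimes \spX'$ (absorbing the eigenvalues into the eigenvectors), then associate to each $\ket{v_k} = \sum_{ij}(K_k)_{ij}\,\ket{i}_\spY\ket{j}_{\spX'}$ the linear operator $K_k : \spX \to \spY$ whose matrix entries are $(K_k)_{ij}$. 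I would then verify on matrix units that $\Gamma(\ketbra{i}{j}) = \sum_k K_k \ketbra{i}{j} K_k^\dagger$, and extend by linearity. The resulting Kraus-type sum makes complete positivity manifest, because for any ancilla $\mathcal{A}$ and any $\sigma \geq 0$ on $\spX \otimes \mathcal{A}$, $(\Gamma \otimes \id_\mathcal{A})(\sigma) = \sum_k (K_k \otimes \id_\mathcal{A})\,\sigma\,(K_k^\dagger \otimes \id_\mathcal{A}) \geq 0$ as a sum of conjugations of a positive operator.

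For the TP equivalence, I would compute the partial trace explicitly. Writing $\ketbra{\Psi^+} = \sum_{i,j} \ketbra{i}{j}_\spX \otimes \ketbra{i}{j}_{\spX'}$ and applying $\Gamma$ to the first factor gives $\tr_\spY[J^\Gamma] = \sum_{i,j} \tr[\Gamma(\ketbra{i}{j}_\spX)]\, \ketbra{i}{j}_{\spX'}$. If $\Gamma$ is trace-preserving, then $\tr[\Gamma(\ketbra{i}{j})] = \tr[\ketbra{i}{j}] = \delta_{ij}$, yielding $\id_{\spX'}$. Conversely, equating coefficients against the basis $\{\ketbra{i}{j}\}_{i,j}$ forces $\tr[\Gamma(\ketbra{i}{j})] = \delta_{ij} = \tr[\ketbra{i}{j}]$ for all $i,j$, which by linearity extends to $\tr[\Gamma(\rho)] = \tr[\rho]$ for every operator $\rho$ on $\spX$.

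The main obstacle is the backward CP direction, specifically justifying that the reshaped vectors $\{\ket{v_k}\}$ genuinely reconstruct $\Gamma$ via the Kraus sum, not merely some other map with the same Choi operator. This rests on the fact that the assignment $\Gamma \mapsto J^\Gamma$ is a linear bijection between linear maps $\spX \to \spY$ and operators on $\spY \otimes \spX'$: any two maps producing identical Choi operators must agree on all rank-one operators $\ketbra{i}{j}_\spX$, and hence on every input. Once this bookkeeping is in place, positive semidefiniteness of $J^\Gamma$ translates cleanly into the existence of Kraus operators, and the CP property of $\Gamma$ follows by inspection of the resulting operator-sum representation.
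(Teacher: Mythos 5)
Your argument is correct. Note, however, that the paper does not actually prove this statement: it is quoted as a standard preliminary (Choi's theorem together with the partial-trace characterization of trace preservation) with a citation to Watrous, so there is no in-paper proof to compare against. What you have written is the standard textbook argument — reshaping the eigenvectors of a positive semidefinite $J^\Gamma$ into Kraus operators $K_k$, checking $\Gamma(\ketbra{i}{j})=\sum_k K_k\ketbra{i}{j}K_k^\dagger$ on matrix units, and reading off trace preservation from $\tr_\spY[J^\Gamma]=\sum_{i,j}\tr[\Gamma(\ketbra{i}{j})]\ketbra{i}{j}_{\spX'}$ — and you correctly flag the one point that needs care, namely that $\Gamma\mapsto J^\Gamma$ is a linear bijection so the Kraus-reconstructed map really is $\Gamma$ itself. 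The only cosmetic mismatch with the paper's conventions is that it writes $J^\Gamma=(\Gamma\otimes\id)\ketbra{\Psi^+}$ with the \emph{unnormalized} $\ket{\Psi^+}=\sum_i\ket{i}\ket{i}$ and with $\Gamma$ acting on the first tensor factor (output space $\spY$ first, copy $\spX'$ second); your index bookkeeping is consistent with that convention, so no change is needed.
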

Choi-Jamiolkowski operator also provides a convenient way to represent the input states and the measurements on the outputs of the quantum channels~\cite{chiribella2008,chiribella2009}:
\begin{theorem}\label{thm:Choi}
    Given quantum state $\rho$ defined on $\spX$, an observable $O$ defined on $\spY$, and a quantum channel $\Gamma: \spX\to \spY$, the expectation value of $O$ with respect to $\Gamma(\rho)$ is given by
    \begin{equation}
        \tr[\Gamma(\rho)O] = \tr[\trans{(\rho\otimes O)} J^\Gamma] \ .
    \end{equation}
\end{theorem}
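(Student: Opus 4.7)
The plan is to prove the identity by direct computation from the Choi-Jamiolkowski definition, using the ``ricochet'' identity $(A\otimes\id)\ket{\Psi^+}=(\id\otimes A^T)\ket{\Psi^+}$ for the unnormalised maximally entangled state $\ket{\Psi^+}=\sum_i\ket{i}_\spX\ket{i}_{\spX'}$. This is the standard tool for moving operators across the two copies of the reference system appearing in $J^\Gamma$.

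First I would reduce to matrix units by linearity. Both sides of the claimed identity are linear in $\rho$ and in $O$ separately, so it suffices to verify the equality for $\rho=\ketbra{k}{l}$ on $\spX$ and $O=\ketbra{m}{n}$ on $\spY$. Expanding $\ketbra{\Psi^+}=\sum_{ij}\ketbra{i}{j}\otimes\ketbra{i}{j}$ and pushing $\Gamma$ through the tensor product gives $J^\Gamma=\sum_{ij}\Gamma(\ketbra{i}{j})\otimes\ketbra{i}{j}$ as an operator on $\spY\otimes\spX'$. Substituting this into the right-hand side and performing the traces against $O$ on $\spY$ and against the appropriately transposed state factor on $\spX'$ collapses the double sum through $\bra{j'}\ketbra{i}{j}\ket{i'}=\delta_{ii'}\delta_{jj'}$, leaving exactly $\bra{n}\Gamma(\ketbra{k}{l})\ket{m}$, which is the matrix-element form of $\tr[\Gamma(\rho)O]$.

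A cleaner packaging of the same calculation, which I would actually present, is first to derive the auxiliary inversion formula $\Gamma(\rho)=\tr_{\spX'}[(\id_\spY\otimes \rho^T)J^\Gamma]$. This follows in one line from the ricochet identity by transferring the action of $\rho$ from the $\spX$ side of $\ket{\Psi^+}$ to the $\spX'$ side (where it acquires a transpose) and then using $(\Gamma\otimes\id)$ on the resulting state. Taking the trace of $\Gamma(\rho)O$ over $\spY$, combining the inner partial trace with the outer trace by cyclicity, and rearranging tensor factors then yields the claimed identity.

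The main obstacle is purely bookkeeping: keeping straight which tensor factor corresponds to the input of $\Gamma$ versus its output, and exactly where the transpose acts (the statement writes $\trans{(\rho\otimes O)}$, but the computation most naturally produces the transpose only on the input-side factor $\rho$, so one must reconcile the convention or appeal to partial-transpose notation). Beyond these conventions, there is no conceptual difficulty --- once the ricochet identity is in hand, the argument is a short manipulation with no estimates or nontrivial structure theorems required.
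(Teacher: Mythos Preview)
Your proposal is correct and follows the standard route --- expand $J^\Gamma=\sum_{ij}\Gamma(\ketbra{i}{j})\otimes\ketbra{i}{j}$, contract against $\rho^{T}$ on the reference factor to recover $\Gamma(\rho)$, then trace against $O$ on the output --- and your use of the ricochet identity is exactly the right tool. There is nothing to compare against: the paper does not supply a proof of this statement at all but simply records it with citations to the quantum-combs literature (Chiribella et al.), so your write-up would in fact add content rather than duplicate anything.

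Your closing remark about the transpose is well taken and worth keeping. The direct computation yields $\tr[(\rho^{T}\otimes O)\,J^\Gamma]$ with the transpose only on the input factor; the paper's global transpose $\trans{(\rho\otimes O)}$ is a slightly loose convention that happens to be harmless in every application made later (where the observables are swaps and identities, all invariant under transpose, as the paper itself notes when invoking the formula). You should flag this explicitly rather than silently ``reconcile'' it, since a reader checking the identity for a non-real Hermitian $O$ would otherwise be puzzled.
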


In the scope of this paper, we will focus on the endomorphisms of a system $\spX$, which are those linear maps $\Gamma: \spX\to \spX$ whose output system is identical to the input system.
More specifically, we will focus on the quantum channels that are induced by unitary operators, namely unitary channels.
Let $\UCh$ denote the unitary channel with respect to a unitary operator $U$ acts on $\spX$, such that $\UCh(\cdot) = U (\cdot) U^\dagger$.
Then the Choi-Jamiolkowski operator of $\UCh$ is given by
\begin{equation}
    J^\UCh = \qty(U \otimes \id_\spX) \ketbra{\Psi^+} \qty(U^\dagger\otimes \id_\spX) \ ,
\end{equation}
which is a rank-one projector.
For a given unitary $U$, there is an unique $\ket{\psi}$ (up to a global phase), such that $J^\UCh = \ketbra{\psi}$.
In the special case that $U = \mathbb{I}$, the state is the unnormalized maximally entangled state $\ket{\Psi^+}$.
According to the TP property of quantum channels, $\ketbra{\psi}{\psi} = \tr[J^\UCh] = \tr[\mathbb{I}_\spX] = d$, where $d$ is the dimension of $\spX$.

\section{\appone} 
\label{sec:SwapTest}

It is known that the purity, $\Tr[\rho^2]$, of a state $\rho$ can be measured with swap test efficiently~\cite{barenco1997}.
With the ability to perform a controlled swap between two copies of $\rho$, the value of $\Tr[\rho^2]$ is proportional to the probability of measuring $0$ in the circuit of figure~\ref{fig:swap_test}.

\begin{figure}[htp]
\centering
\begin{quantikz}
    \lstick{$\ket{0}$} & \gate{H}  & \ctrl{2}  &  \gate{H} & \meter{} \\
    \lstick{$\rho$}    & \qw       & \gate[swap]{} &  \qw    & \qw   \\
    \lstick{$\rho$}    & \qw       &           &  \qw      &  \qw  \\
\end{quantikz}
\caption{Conventional swap test.}\label{fig:swap_test}
\end{figure}
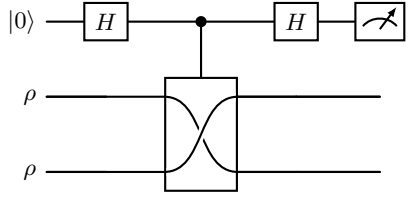

\begin{figure}[htp]
\centering
\begin{quantikz}
    \lstick{$\rho$}   &  \targ{}    &  \qw      & \meter{z_1}  \\
    \lstick{$\rho$}   &  \ctrl{-1}  &  \gate{H} & \meter{z_2}  \\
\end{quantikz}
\caption{Destructive swap test}\label{fig:destructive_swap_test}
\end{figure}
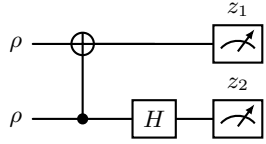

Alternatively, the measurement of purity can be done with pair-wise Bell measurements instead of the controlled swap test~\cite{garcia-escartin2013, larose2019}.
This is done by a bit-wise bell measurement on a pair of systems as shown in figure~\ref{fig:destructive_swap_test}, which are called \emph{destructive swap test}.
For large quantum systems, the destructive swap test is more practical than the conventional swap test.
In the conventional scheme, the control qubit must maintain a strong quantum correlation with all qubits that are been swapped.
However, in the destructive swap test, each pair of qubits is measured separately, and the computation of purity only requires classical correlations between the measurement outcomes.
In summary,
\begin{theorem}\label{thm:measurement}
    The purity of a state $\rho$ can be measured as an observable $S$, where $\Tr[\rho^2] = \Tr[\rho\otimes \rho S]$, such that the evaluation can be measured with a constant depth quantum circuit and a linear time classical post-processing.
\end{theorem}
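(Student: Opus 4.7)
The plan is to break the theorem into two parts: first establish the algebraic identity expressing $\Tr[\rho^2]$ as the expectation of the swap observable, and then verify that this observable is measured (in the destructive variant) by a constant-depth circuit followed by a simple classical product of bits.

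For the first part, I would take $S$ to be the full swap operator between the two $n$-qubit copies, which I expand in a basis as $S = \sum_{a,b}\ket{a}\bra{b}\otimes \ket{b}\bra{a}$. A direct computation gives $\Tr[(\rho\otimes\rho) S] = \sum_{a,b}\bra{b}\rho\ket{a}\bra{a}\rho\ket{b} = \Tr[\rho^2]$, establishing the observable form of the purity. This is purely algebraic and poses no real obstacle.

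For the second part, I would use the tensor factorization $S = \bigotimes_{i=1}^{n} S_i$, where $S_i$ is the two-qubit swap between the $i$-th qubit of one copy and the $i$-th qubit of the other. Because each $S_i$ acts on a disjoint pair of qubits, it suffices to measure every $S_i$ independently in parallel, giving constant circuit depth. I would then show that the Bell-basis measurement implemented by the CNOT-then-Hadamard circuit in Fig.~\ref{fig:destructive_swap_test} diagonalizes $S_i$: a case-by-case check on the four Bell states verifies that the outcome $(z_1^{(i)}, z_2^{(i)})$ corresponds to a $+1$ eigenstate of $S_i$ whenever at least one of the bits is $0$, and to the $-1$ singlet eigenstate precisely when $z_1^{(i)} = z_2^{(i)} = 1$. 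Equivalently, the measured eigenvalue of $S_i$ is $(-1)^{z_1^{(i)} z_2^{(i)}}$.

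Combining these, each experimental shot produces an unbiased estimator $\prod_{i=1}^{n} (-1)^{z_1^{(i)} z_2^{(i)}}$ whose expectation equals $\Tr[(\rho\otimes\rho) S] = \Tr[\rho^2]$. The quantum part requires only depth-two parallel gates (a CNOT and a Hadamard per qubit pair), while the classical post-processing is a single parity computation in time $O(n)$. The only step that requires any care is the Bell-basis check in the middle paragraph; the rest is either a standard identity or immediate from the tensor factorization of $S$.
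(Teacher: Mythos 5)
Your proposal is correct, and it reaches the result by a genuinely different route from the paper. You first prove the algebraic identity $\Tr[(\rho\otimes\rho)S]=\Tr[\rho^2]$ directly from the matrix-element expansion of the swap, then factor $S=\bigotimes_i S_i$ into pairwise two-qubit swaps and verify that the Bell-basis measurement diagonalizes each $S_i$ with eigenvalue $(-1)^{z_1^{(i)}z_2^{(i)}}$ (the singlet, and only the singlet, landing on outcome $(1,1)$). The paper instead works forward from the measurement: it writes the Bell-outcome probabilities as $p(z_1,z_2)=\frac{1}{2}\Tr[\rho\,\sigma_{z_1,z_2}\trans{\rho}\sigma_{z_1,z_2}]$ using the identity $A\otimes B\ket{\Psi^+}=(A\trans{B})\otimes\id\ket{\Psi^+}$, then invokes the Pauli decomposition of the transpose map $\trans{\rho}=\frac{1}{2}\sum_{z_1,z_2}(-1)^{z_1\cdot z_2}\sigma_{z_1,z_2}\rho\,\sigma_{z_1,z_2}$ to recover the purity, and only afterwards proves, as a separate theorem, that the resulting observable $S_n=\sum(-1)^{\pmb{z}_1\cdot\pmb{z}_2}\ketbra{\Psi_{\pmb{z}_1,\pmb{z}_2}}$ is the swap operator. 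Your order of argument is more economical for this theorem in isolation: the swap trick is immediate and the only nontrivial content is a four-case eigenbasis check, and it makes the constant-depth parallelism manifest from the tensor factorization. The paper's route costs more here but sets up machinery (the Choi-state identity and the transpose decomposition) that it reuses in later sections, and it delivers the identification of the measured observable with the projectors onto the symmetric and antisymmetric subspaces, which the state-preparation theorem then relies on. Both proofs are complete; yours omits nothing essential for the statement as posed.
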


\begin{proof}
The proof starts from the single-qubit case, where $\rho$ acts on
$\mathbb{C}^2$.
Let $\ket{\Psi^+}:= \frac{1}{\sqrt{2}} \qty(\ket{00}+\ket{11})$, the
projective measurement in circuit~\ref{fig:destructive_swap_test} is given
by the projectors that project to
\begin{equation}
    \ket{\Psi_{z_1,z_2}} := \sigma_X^{z_1}\otimes \sigma_Z^{z_2} \ket{\Psi^+} \ ,
\end{equation}
where $a$ and $b$ take values from $\qty{0,1}$, which are the result of the measurement on each subsystem.
This is exactly the Bell basis, and hence the measurement~\ref{fig:destructive_swap_test} is commonly known as the Bell measurement.
The Bell basis can be rewritten as
\begin{equation}
    (-i)^{z_1\cdot z_2} \sigma_{z_1,z_2} \otimes \id \ket{\Psi^+} \ ,
\end{equation}
where $\sigma_{z_1,z_2} = i^{z_1,z_2}\sigma_X^{z_1} \sigma_Z^{z_2}$ are the Pauli operators:
$$\mathbb{I}=\sigma_{0,0};\quad\sigma_X=\sigma_{1,0};\quad\sigma_{Z}=\sigma_{0,1};\quad\sigma_{Y}=\sigma_{1,1} \ .$$
This alternative form is derived from the identity, where for all operators $A, B$ acting on $\complex^2$,
\begin{equation} \label{eq:Bell_identity}
    A\otimes B \ket{\Psi^+} = (A\trans{B})\otimes \id \ket{\Psi^+} \ .
\end{equation}
As the result, the probability of obtaining measurement outcome $(z_1,z_2)$ in the destructive swap test~\ref{fig:destructive_swap_test} is given by
\begin{equation}
    p(z_1,z_2) = \bra{\Psi^+} \qty(\sigma_{z_1,z_2} \rho \sigma_{z_1,z_2})\otimes \rho \ket{\Psi^+} \ ,
\end{equation}
which uses the fact that Pauli operators are Hermitian.
By applying \eqref{eq:Bell_identity} to the second copy of $\rho$, together with the definition of $\ket{\Psi^+}$,
\begin{equation}
    p(z_1,z_2) = \frac{1}{2}\Tr[\rho\sigma_{z_1,z_2}\trans{\rho}\sigma_{z_1,z_2}] \ .
\end{equation}

As the transpose is a Hermitian-preserving trace-preserving linear map on operators, it has a unique affine decomposition in terms of the Pauli operators:
\begin{equation}\label{eq:trans}
    \trans{\rho} = \frac{1}{2}
    \sum_{z_1,z_2} (-1)^{z_1\cdot z_2} \sigma_{z_1,z_2} \rho \sigma_{z_1,z_2} \ .
\end{equation}
Also, since transpose is an involution, we may interchange $\rho$ with
$\trans{\rho}$ in \eqref{eq:trans}.
As a result,
\begin{equation}
\begin{aligned}
    \Tr[\rho^2] &= \Tr[\rho \frac{1}{2}
    \sum_{z_1,z_2} (-1)^{z_1\cdot z_2} \sigma_{z_1,z_2} \trans{\rho} \sigma_{z_1,z_2}] \\
    &= \sum_{z_1,z_2} (-1)^{z_1\cdot z_2} \frac{1}{2}
    \Tr[\rho \sigma_{z_1,z_2} \trans{\rho} \sigma_{z_1,z_2}] \\
    &= \sum_{z_1,z_2} (-1)^{z_1\cdot z_2}p(z_1,z_2) \ .
\end{aligned}
\end{equation}
This indicates that the value of $\Tr[\rho^2]$ can be obtained with a simple classical post-processing, where after each measurement, $(-1)^{z_1\cdot z_2}$ is computed, whose mean is the value of $\Tr[\rho^2]$.
This procedure is equivalent to the evaluation of the expected value of the observable
\begin{equation}
    S_1 := \sum_{z_1,z_2}(-1)^{z_1\cdot z_2} \ketbra{\Psi_{z_1,z_2}} \ .
\end{equation}

This can be easily extended to multi-qubit systems, where the Bell measurements are performed pairwisely between the two copies of the same state.
In this scenario, the measurement results are given by the Boolean vectors $\pmb{z}_1$ and $\pmb{z}_2 \in \mathbb{B}^n$, where the $i$th element of each vector corresponds to the measurement outcome of the $i$th pair of qubits.
The multi-qubit Bell states are given by
\begin{equation}
    \ket{\Psi_{\pmb{z}_1,\pmb{z}_2}} := \bigotimes_{i} \ket{\Psi_{z_{i,1},z_{i,2}}} \ ,
\end{equation}
where $\ket{\psi_{z_{i,1},z_{i,2}}}$ are the Bell states on $i$th qubit.
The probability of each measurement outcome is given by
\begin{equation}
    p(\pmb{z}_1,\pmb{z}_2) = \bra{\Psi^+}
    \qty(\sigma_{\pmb{z}_1,\pmb{z}_2} \rho \sigma_{\pmb{z}_1,\pmb{z}_2})
    \otimes \rho \ket{\Psi^+} \ ,
\end{equation}
where $\sigma_{\pmb{z}_1,\pmb{z}_2} := \bigotimes_{i} \sigma_{z_{i,1},z_{i,2}}$.
By applying identity \eqref{eq:Bell_identity} to all qubits simultaneously,
\begin{equation}
    p(\pmb{z}_1,\pmb{z}_2)
    = \frac{1}{2^n}\Tr[\rho\sigma_{\pmb{z}_1,\pmb{z}_2}
    \trans{\rho}\sigma_{\pmb{z}_1,\pmb{z}_2}] \ .
\end{equation}
Since the transpose of a $n$-qubit state is equivalent to partial transpose
each qubit,
\begin{equation}
\begin{aligned}
    \trans{\rho} &= \sum_{\pmb{z}_1,\pmb{z}_2} \qty(\prod_i \frac{1}{2}
    (-1)^{z_{i,1}\cdot z_{i,2}})
    \sigma_{\pmb{z}_1,\pmb{z}_2} \rho \sigma_{\pmb{z}_1,\pmb{z}_2} \\
    &= \frac{1}{2^n} \sum_{\pmb{z}_1,\pmb{z}_2} (-1)^{\pmb{z}_1\cdot \pmb{z}_2}
    \sigma_{\pmb{z}_1,\pmb{z}_2} \rho \sigma_{\pmb{z}_1,\pmb{z}_2} \ ,
\end{aligned}
\end{equation}
where $\pmb{z}_1\cdot\pmb{z}_2 := \sum_i z_{i_1}\cdot z_{i_2}$.
The purity can be expressed as
\begin{equation}
    \Tr[\rho^2] = \sum_{\pmb{z}_1,\pmb{z}_2} (-1)^{\pmb{z}_1\cdot \pmb{z}_2}
    p(\pmb{z}_1,\pmb{z}_2) \ .
\end{equation}
That is, to measure the purity, one Bell measurements~\ref{fig:destructive_swap_test} is applied to each pair of qubits.
After the outcomes are obtained, the value $(-1)^{\pmb{z}_1\cdot \pmb{z}_2}$ is computed, whose mean is the value of $\Tr[\rho^2]$.
As the Bell measurements are independent, all of them can be done simultaneously within two layers of quantum gates.
The post-processing is a simple parity checking, whose complexity grows linearly with the number of qubits.
The corresponding observables are
\begin{equation}
    S_n := \sum_{\pmb{z}_1,\pmb{z}_2}(-1)^{\pmb{z}_1\cdot \pmb{z}_2} \ketbra{\Psi_{\pmb{z}_1,\pmb{z}_2}} \ .
\end{equation}
\end{proof}

\begin{theorem}
    The observable $S$ in theorem~\ref{thm:measurement} is the swap operator between the two copies of $\rho$.
\end{theorem}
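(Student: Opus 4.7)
The plan is to verify the claim at the single-qubit level first, and then lift it to $n$ qubits by noting that both the Bell-state factorization and the sign $(-1)^{\pmb{z}_1\cdot\pmb{z}_2}$ factor across the qubit pairs.

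For a single qubit pair, the four Bell states $\ket{\Psi_{z_1,z_2}} = (\sigma_X^{z_1}\otimes\sigma_Z^{z_2})\ket{\Psi^+}$ form an orthonormal basis of $\mathbb{C}^2\otimes\mathbb{C}^2$. I would write out the three triplet states $\ket{\Psi_{0,0}},\ket{\Psi_{1,0}},\ket{\Psi_{0,1}}$ (symmetric under swap) and the singlet $\ket{\Psi_{1,1}}=\tfrac{1}{\sqrt 2}(\ket{10}-\ket{01})$ (antisymmetric under swap). Reading off the sign $(-1)^{z_1\cdot z_2}$, the triplet terms carry $+1$ and the singlet carries $-1$, so $S_1$ is diagonal in the Bell basis with eigenvalues matching the swap operator on its symmetric/antisymmetric subspaces. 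Since the swap is likewise $+1$ on the triplet and $-1$ on the singlet, $S_1 = \mathrm{SWAP}_1$. Equivalently, one may invoke the Pauli-basis identity $\mathrm{SWAP}_1 = \tfrac12(\id\otimes\id + X\otimes X + Y\otimes Y + Z\otimes Z)$ and recover the same result by expanding the Bell-state projectors.

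For the general $n$-qubit case, the multi-qubit Bell states factorize, $\ket{\Psi_{\pmb{z}_1,\pmb{z}_2}} = \bigotimes_i \ket{\Psi_{z_{i,1},z_{i,2}}}$, and the phase splits as $(-1)^{\pmb{z}_1\cdot\pmb{z}_2} = \prod_i (-1)^{z_{i,1}\cdot z_{i,2}}$. Hence
\begin{equation}
S_n = \sum_{\pmb{z}_1,\pmb{z}_2}\prod_i (-1)^{z_{i,1}\cdot z_{i,2}} \bigotimes_i \ketbra{\Psi_{z_{i,1},z_{i,2}}} = \bigotimes_i S_1^{(i)},
\end{equation}
where each $S_1^{(i)}$ acts as the single-qubit swap between the $i$-th qubit of the first copy and the $i$-th qubit of the second copy. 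Since the $n$-qubit swap between the two registers is precisely this tensor product of pairwise single-qubit swaps, we conclude $S_n = \mathrm{SWAP}_n$.

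The argument is essentially a direct diagonalization, so I do not expect a conceptual obstacle; the only care needed is bookkeeping of the tensor structure (matching the $i$-th qubit of one register with the $i$-th qubit of the other) to make the factorization step rigorous. As a sanity check one can verify that, with $S_n=\mathrm{SWAP}_n$, the identity $\Tr[(\rho\otimes\rho)S_n]=\Tr[\rho^2]$ derived in Theorem~\ref{thm:measurement} is the well-known swap-test identity, which is consistent.
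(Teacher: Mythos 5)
Your proof is correct, but it takes a different route from the paper's. The paper works globally: it writes each $n$-qubit Bell state as $(-i)^{\pmb{z}_1\cdot\pmb{z}_2}\,\sigma_{\pmb{z}_1,\pmb{z}_2}\otimes\id\ket{\Psi^+}$, applies the full swap, and uses the transpose identity $\id\otimes A\ket{\Psi^+}=\trans{A}\otimes\id\ket{\Psi^+}$ together with $\trans{\sigma}_{\pmb{z}_1,\pmb{z}_2}=(-1)^{\pmb{z}_1\cdot\pmb{z}_2}\sigma_{\pmb{z}_1,\pmb{z}_2}$ to conclude that every Bell state is a swap eigenvector with eigenvalue $(-1)^{\pmb{z}_1\cdot\pmb{z}_2}$; it then identifies the two resulting projectors with $\Pi_S$ and $\Pi_A$ by a dimension count, so $S_n=\Pi_S-\Pi_A$. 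You instead verify the statement for a single qubit pair by explicit diagonalization (triplet vs.\ singlet) and lift it via the factorizations $\ket{\Psi_{\pmb{z}_1,\pmb{z}_2}}=\bigotimes_i\ket{\Psi_{z_{i,1},z_{i,2}}}$ and $(-1)^{\pmb{z}_1\cdot\pmb{z}_2}=\prod_i(-1)^{z_{i,1}\cdot z_{i,2}}$, giving $S_n=\bigotimes_i S_1^{(i)}$, which is the register swap since the latter acts as independent pairwise swaps. Your argument is more elementary (four explicit states plus the distributive law for tensor products) and sidesteps both the transpose identity and the dimension count; the paper's version buys, as a byproduct, the explicit statement that the even-parity Bell states span the symmetric subspace, which it reuses in the state-preparation result (Theorem~\ref{thm:state_prep}). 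That fact also follows immediately from your eigenvalue computation, so nothing downstream is lost. One small bookkeeping point you already flag and should make explicit if writing this up: the tensor factors in $\bigotimes_i S_1^{(i)}$ are ordered by qubit pair $(i_\alpha,i_\beta)$ rather than by register, so an implicit permutation of tensor factors is needed to identify $\bigotimes_i S_1^{(i)}$ with the swap written on $\hilb^{\otimes 2}$; since that permutation is the same on both sides, the identification is valid.
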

\begin{proof}
To check that $S_n$ is indeed the swap operator between the two copies of $\rho$, we first notice that $S_n$ can be expressed as two terms:
\begin{equation}
    S_n := \sum_{\pmb{z}_1\cdot \pmb{z}_2 = 0} \ketbra{\Psi_{\pmb{z}_1,\pmb{z}_2}}
    - \sum_{\pmb{z}_1\cdot \pmb{z}_2 = 1} \ketbra{\Psi_{\pmb{z}_1,\pmb{z}_2}}  \ .
\end{equation}
As Bell states $\ket{\Psi_{\pmb{z}_1,\pmb{z}_2}}$ are orthogonal to each other, the two terms are (unnormalised) orthogonal projectors.
As discussed in the proof of the theorem~\ref{thm:measurement}, the Bell basis can be expressed as
\begin{equation}
    \ket{\Psi_{\pmb{z}_1,\pmb{z}_2}} := (-i)^{\pmb{z}_1\cdot \pmb{z}_2}
    \sigma_{\pmb{z}_1,\pmb{z}_2} \otimes \mathbb{I} \ket{\Psi^+} \ .
\end{equation}
By swapping the two subspaces
\begin{equation}
\begin{aligned}
    S \ket{\Psi_{\pmb{z}_1,\pmb{z}_2}}
    &= (-i)^{\pmb{z}_1\cdot \pmb{z}_2} \mathbb{I} \otimes
    \sigma_{\pmb{z}_1,\pmb{z}_2} \ket{\Psi^+} \\
    &= (-i)^{\pmb{z}_1\cdot \pmb{z}_2}
    \trans{\sigma}_{\pmb{z}_1,\pmb{z}_2} \otimes
    \mathbb{I} \ket{\Psi^+} \\
    &= (-i)^{\pmb{z}_1\cdot \pmb{z}_2}(-1)^{\pmb{z}_1\cdot \pmb{z}_2}
    \sigma_{\pmb{z}_1,\pmb{z}_2} \otimes
    \mathbb{I} \ket{\Psi^+} \\
    &= (-1)^{\pmb{z}_1\cdot \pmb{z}_2} \ket{\Psi_{\pmb{z}_1,\pmb{z}_2}} \ .
\end{aligned}
\end{equation}
That is, $\ket{\Psi_{\pmb{z}_1,\pmb{z}_2}}$ is symmetric if $\pmb{z}_1\cdot\pmb{z}_2 = 0$, and is anti-symmetric if it is $1$.
By counting the dimension of the symmetric subspace, it is easy to show that those $\ket{\Psi_{\pmb{z}_1,\pmb{z}_2}}$ with $\pmb{z}_1\cdot\pmb{z}_2 = 0$ forms a complete basis of the symmetric subspace, and $\Pi_S := \sum_{\pmb{z}_1\cdot \pmb{z}_2 = 0} \ketbra{\Psi_{\pmb{z}_1,\pmb{z}_2}}$ is the projector to the symmetric subspace.
Similarly, those with $\pmb{z}_1\cdot\pmb{z}_2 = 1$ form a complete basis of the anti-symmetric subspace, and the second term, denoted $\Pi_A$, is the projector to the anti-symmetric subspace.
As a result, $S_n = \Pi_S - \Pi_A$ is the swap operator, which preserves the symmetric subspace and provides a $-1$ phase to the anti-symmetric subspace.
\end{proof}

\begin{theorem}\label{thm:state_prep}
    Given an observable $O$ defined for 2-copies of $\UCh(\rho) \in \hilb$, and consider the expectation value $E(\rho)$ with respect to a channel $\UCh$ that $E = \Tr[\UCh(\rho)\otimes \UCh(\rho) O]$.
    The Haar average of $E$, $\int \dd\psi E(\ketbra{\psi})$, can be evaluated by measuring $O$ on a single initial state $\tau\in \hilb^{\otimes2}$, such that $\tau$ can be prepared with a constant depth quantum circuit and a linear time classical pre-processing.
\end{theorem}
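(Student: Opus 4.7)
The plan is to pull the Haar integral past the channel and trace using linearity, identify the resulting operator as the standard 2-copy Haar moment, and then describe how it can be prepared by the circuit in Fig.~2(b).

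First, since both $\UCh\otimes\UCh$ and the trace are linear, I can write
\begin{equation*}
\int \dd\psi\, E(\ketbra{\psi}) = \Tr\!\bigl[(\UCh\otimes\UCh)(\tau)\, O\bigr], \qquad \tau := \int \dd\psi\, \ketbra{\psi}^{\otimes 2} \in \hilb^{\otimes 2},
\end{equation*}
where $\tau$ depends only on $\hilb$, not on $O$ or $\UCh$. I would then evaluate $\tau$ using Schur--Weyl duality: the $U\otimes U$-invariance of the integrand forces $\tau$ to lie in $\operatorname{span}\{\id, S\}$, where $S$ is the swap on $\hilb^{\otimes 2}$. Fixing the two coefficients by $\Tr\tau = 1$ and $\Tr[\tau S] = \int \dd\psi\, \Tr[\ketbra{\psi}^2] = 1$ yields $\tau = (\id+S)/[d(d+1)] = 2P_s/[d(d+1)]$, the maximally mixed state on the symmetric subspace of dimension $d_+ = d(d+1)/2$.

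Next, I would verify the constant-depth preparation in Fig.~2(b). The scheme combines a linear-time classical preprocessing step that samples a correlated configuration of the $2n$ qubits with a depth-two quantum circuit: a single layer of Hadamards on the $\beta$ qubits followed by one layer of pairwise CNOTs between corresponding $\alpha$ and $\beta$ qubits. Correctness is a direct calculation: expand the resulting classical mixture of Bell-like stabilizer outputs and match its matrix elements to $(\id+S)/[d(d+1)]$ in the computational basis, using $P_s = (\id+S)/2$.

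The main obstacle is pinning down the classical preprocessing so that the circuit's output is exactly $\tau$: naive independent uniform sampling of the $2n$ bits would produce a uniform Bell-state mixture on each pair independently, which globally gives $\id/d^2$ and assigns nonzero weight to the antisymmetric subspace. The correct sampler must correlate the $\alpha$ and $\beta$ bit strings so that the mixture concentrates on the symmetric basis $\{\ket{ss}\}_s \cup \{(\ket{st}+\ket{ts})/\sqrt 2\}_{s<t}$ with uniform weights $2/[d(d+1)]$; checking that the fixed Hadamard-plus-CNOT layer, together with the chosen classical distribution over input configurations, imprints precisely this symmetrized structure is the technical heart of the proof.
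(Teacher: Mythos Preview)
Your high-level strategy matches the paper's exactly: pull the Haar integral inside by linearity, identify $\tau$ as the normalized projector onto the symmetric subspace, and then argue that the Hadamard-plus-CNOT layer of Fig.~2(b) prepares it. The Schur--Weyl argument you sketch for $\tau = 2P_s/[d(d+1)]$ is a perfectly good alternative to the paper's one-line citation.

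The gap is in the preparation step, which you correctly flag as ``the main obstacle'' but do not close. Your instinct to aim for the computational-basis symmetric states $\{\ket{ss}\}\cup\{(\ket{st}+\ket{ts})/\sqrt 2\}$ is a detour: the circuit in Fig.~2(b) sends $\ket{\pmb z_1}\ket{\pmb z_2}$ to the \emph{Bell} state $\ket{\Psi_{\pmb z_1,\pmb z_2}}$, so the relevant orthonormal basis of the symmetric subspace is the Bell basis, not the computational one. The missing ingredient---established in the paper immediately before this theorem---is that $\ket{\Psi_{\pmb z_1,\pmb z_2}}$ lies in the symmetric subspace iff $\pmb z_1\cdot\pmb z_2$ is even, and that these even-parity Bell states form a complete orthonormal basis of $\operatorname{ran}P_s$. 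With that in hand, the classical preprocessing is simply: sample $(\pmb z_1,\pmb z_2)$ uniformly subject to the parity constraint $\pmb z_1\cdot\pmb z_2\equiv 0\pmod 2$, which rejection sampling achieves with acceptance probability $(2^{2n-1}+2^{n-1})/2^{2n}>1/2$ and hence in expected $O(n)$ time. Feeding those bits into the depth-two circuit yields the uniform mixture over the even-parity Bell states, which is exactly $\tau$; no further matrix-element matching against the computational symmetric basis is needed.
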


\begin{proof}
    The average
    \begin{equation}
    \begin{aligned}
        \int \dd\psi E(\ketbra{\psi}) &=\int \dd\psi \tr[ \UCh^{\otimes 2} (\rho_\psi \otimes \rho_\psi)O] \\
        &= \tr[\int \dd\psi \UCh^{\otimes 2}
        (\rho_\psi \otimes \rho_\psi)O] \\
        &= \tr[\UCh^{\otimes 2}
        \qty(\int \dd\psi \rho_\psi \otimes \rho_\psi)O] \ ,
    \end{aligned}
    \end{equation}
    where $\rho_\psi = \ketbra{\psi}$ is distributed according to the Haar measure.
    As the input state $\ket{\psi}\ket{\psi}$ is symmetric at all instances, the Haar average of the input state is a projector state that
    \begin{equation}
        \tau:= \int \dd\psi \rho_\psi \otimes \rho_\psi = \frac{\Pi_S}{\tr[\Pi_S]} \ ,
    \end{equation}
    where $\Pi_S$ is invariant under swap~\cite{watrous2018}.
    To prepare the projector state $\tau$, it is sufficient to find a basis of the projected space and then uniformly random sample the basis states.
    Thus the preparation of state $\Pi_S/\tr[\Pi_S]$ is reduced to uniformly random sample $\pmb{z}_1$ and $\pmb{z}_2$ such that $\pmb{z}_1\cdot\pmb{z}_2$ is even, and according to each pair of $(\pmb{z}_1,\pmb{z}_2)$, prepare state $\ket{\Psi_{\pmb{z}_1,\pmb{z}_2}}$.
    And the states $\ket{\Psi_{\pmb{z}_1,\pmb{z}_2}}$ can be prepared by apply the circuit~\ref{fig:sym_prepare} to classical states $\ket{\pmb{z}_1}\ket{\pmb{z}_2}$ pairwisely.
    Similar to the Bell measurement, the preparation of Bell states only requires two layers of quantum gates.

    \begin{figure}[htp]
    \centering
    \resizebox{0.35\textwidth}{!}{
    \begin{quantikz}
        \lstick{$\ket{z_1}$}  &  \qw      &  \targ{}   & \qw \\
        \lstick{$\ket{z_2}$}  &  \gate{H} &  \ctrl{-1} & \qw \\
    \end{quantikz}
    }
    \caption{The circuit that prepares Bell states.}\label{fig:sym_prepare}
    \end{figure}
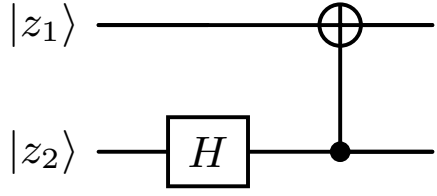

    Although some subtle optimization may be used to sample the desired classical bits $\pmb{z}_1$ and $\pmb{z}_2$, a na\"{i}ve rejective sampling can already uniformly sample the even bits in time linear to the number of qubits.
    This can be achieved by uniformly random sampling all bits in $\pmb{z}_1$ and $\pmb{z}_2$ independently.
    Then resample the bits until $\pmb{z}_1\cdot\pmb{z}_2$ is even.
    The reject rate is less than $1/2$ for all sizes of quantum circuits, as there $2^{2n-1} + 2^{n-1}$ pairs of $\pmb{z}_1$ and $\pmb{z}_2$ that $\pmb{z}_1\cdot\pmb{z}_2$ is even, and the total number of the configurations is $2^{2n}$.
\end{proof}

Combine theorem~\ref{thm:measurement} and \ref{thm:state_prep}, we have the average linear entropy $L$ of the output of a channel $\UCh$:
\begin{equation}
    \begin{aligned}
    \int \dd\psi L(\UCh(\ketbra{\psi})) =&\int \dd\psi (1 - \tr[\UCh(\ketbra{\psi})^2]) \\
    =& 1 - \int \dd\psi \tr[\UCh^{\otimes 2} (\rho_\psi \otimes \rho_\psi)S] \\
    =& 1 - \tr[\UCh^{\otimes 2} (\tau) S] \ .
    \end{aligned}
\end{equation}

This argument can be applied to the case where $L$ is defined for a subsystem.
Given the total space $\hilb$ is decomposed into $\hilb^A\otimes\hilb^B$, the average linear entropy of the output of a channel $\UCh$ on subsystem $A$ is given by
\begin{equation}
    \begin{aligned}
    \iint \dd\psi_A \dd\phi_B L_A :=&\iint \dd\psi_A \dd\phi_B L(\tr_B[\UCh(\ketbra{\psi}_A \otimes\ketbra{\varphi}_B)]) \\
    =& 1 - \iint \dd\psi_A \dd\phi_B \tr[\UCh^{\otimes 2} ({\rho_\psi}_A \otimes {\rho_\psi}_A \otimes {\rho_\phi}_B \otimes {\rho_\phi}_B)S_A\otimes \mathbb{I}_B] \\
    =& 1 - \tr[\UCh^{\otimes 2} (\tau_A\otimes \tau_B) S_A \otimes \mathbb{I}_B] \\
    =& 1 - \tr_A[ \tr_B[\UCh^{\otimes 2} (\tau_A\otimes \tau_B)] S_A] \ .
    \end{aligned}
\end{equation}

As the purity $\Tr[\rho^2]$ is obtained from classical post-processing, the purity of each subsystem can be obtained simultaneously, with the same set of measured data.
That is, for any partition of the qubits encoding $\rho$, all of the values of $\Tr[\pqty{\rho^A}^2]$ and $\Tr[\pqty{\rho^B}^2]$ together with the $\Tr[\rho^2]$ can be obtained with one shot of measurements on the quantum circuit.
This property means that our cost function
\begin{equation}
    \begin{aligned}
    &\csep = \frac{d_A^2}{d_A^2-1} \iint \dd\psi_A \dd\psi_B \frac{L_A+L_B}{2} \\
    \propto& 1 - \frac{\tr[\UCh^{\otimes 2} (\tau_A\otimes \tau_B) S_A \otimes \mathbb{I}_B] +
    \tr[\UCh^{\otimes 2} (\tau_A\otimes \tau_B)\mathbb{I}_A \otimes S_B]}{2} \\
    =& 1 - \frac{\tr[\UCh^{\otimes 2} (\tau_A\otimes \tau_B) (S_A \otimes \mathbb{I}_B + \mathbb{I}_A \otimes S_B)]}{2}
    \end{aligned}
\end{equation}
can be evaluated with a single shot of measurements on the quantum circuit, and a linear time classical post-processing.

\section{\apptwo} 
\label{sec:ParameterShift}

\subsection{Preliminary}

The parameter shift rule~\cite{mitarai2018} is widely used in variational quantum circuits for gradient-based optimization.
It outperforms the finite difference method, which estimate $\pdv{\theta_i} f(\vb*{\theta}) \approx \frac{f(\vb*{\theta} + \Delta \theta_i) - f(\vb*{\theta})}{\Delta \theta_i}$, with a finite small number $\Delta \theta_i$.
In the finite difference method, as the denominator $\Delta\theta_i$ is small, the difference of $f$ is a small number, which is relatively sensitive to the noises.
But, with the parameter shift rule, the gradient can be estimated directly, which is less sensitive to perturbations as long as the gradient is non-varnishing.

The usual parameter shift rule is based on the idempotent properties of Hermitian generators.
Let us consider a single-parameterized unitary $U(\theta_i) = e^{-i \frac{\theta_i}{2} H}$ that is generated by an idempotent Hermitian operator $H$ that $H^2 = \mathbb{I}$.
The parameter shift rule says

\begin{proposition}\label{prop:ParameterShift}
    For all observables $A$ and all initial states $\rho$, the derivative of the expectation value
    \begin{equation}
        \begin{aligned}
        \pdv{\theta_i} \tr[U(\theta_i)\rho U^\dagger(\theta_i)A]
        =& \frac{\tr[U(\theta_i+\frac{\pi}{2}) \rho U^\dagger(\theta_i+\frac{\pi}{2})A]
        - \tr[U(\theta_i-\frac{\pi}{2}) \rho U^\dagger(\theta_i-\frac{\pi}{2})A]}{2} \ .
        \end{aligned}
    \end{equation}
\end{proposition}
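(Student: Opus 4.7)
The plan is to differentiate directly and then show that the right-hand side of the claimed identity reproduces the derivative. My first step would be to apply the product rule and use the generator: since $U(\theta_i) = e^{-i\theta_i H/2}$, we have $\partial_{\theta_i} U = -\tfrac{i}{2} H U = -\tfrac{i}{2} U H$ (because any analytic function of $H$ commutes with $H$). This yields
\begin{equation}
\pdv{\theta_i} \tr\bigl[U(\theta_i)\rho U^\dagger(\theta_i) A\bigr] = -\frac{i}{2}\tr\bigl[\,[H,\,U(\theta_i)\rho U^\dagger(\theta_i)]\,A\,\bigr],
\end{equation}
so the task reduces to showing that the two shifted terms on the right-hand side combine into exactly the same commutator expression.

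The second step is to exploit the idempotency $H^2=\id$, which collapses the exponential series: $e^{-i\alpha H} = \cos(\alpha)\id - i\sin(\alpha) H$. Setting $\alpha = \pi/4$ and using $U(\theta_i \pm \tfrac{\pi}{2}) = U(\theta_i)\,e^{\mp i\pi H/4}$, I would write
\begin{equation}
U\!\left(\theta_i \pm \tfrac{\pi}{2}\right) = \tfrac{1}{\sqrt{2}}\,U(\theta_i)\bigl(\id \mp i H\bigr).
\end{equation}

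The third step is algebraic: expand $U(\theta_i + \tfrac{\pi}{2})\rho U^\dagger(\theta_i + \tfrac{\pi}{2})$ and $U(\theta_i - \tfrac{\pi}{2})\rho U^\dagger(\theta_i - \tfrac{\pi}{2})$ using the factorisation above. The symmetric pieces ($\rho + H\rho H$) cancel in the difference, and only the cross terms survive, giving
\begin{equation}
U\!\left(\theta_i+\tfrac{\pi}{2}\right)\rho U^\dagger\!\left(\theta_i+\tfrac{\pi}{2}\right) - U\!\left(\theta_i-\tfrac{\pi}{2}\right)\rho U^\dagger\!\left(\theta_i-\tfrac{\pi}{2}\right) = -i\,U(\theta_i)\,[H,\rho]\,U^\dagger(\theta_i).
\end{equation}
Because $U(\theta_i)$ is a function of $H$, it commutes with $H$, so $U[H,\rho]U^\dagger = [H, U\rho U^\dagger]$. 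Dividing by two and inserting into $\tr[\,\cdot\, A]$ reproduces the derivative from the first step, completing the proof.

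The only real obstacle is keeping the constants straight: the factor $1/2$ in the generator is what fixes the shift at $\pi/2$ (not $\pi$), and the normalisation $1/\sqrt{2}$ from $\cos(\pi/4)=\sin(\pi/4)$ is what gives the clean $1/2$ prefactor on the right-hand side. Everything else is routine once the closed form $e^{-i\alpha H}=\cos\alpha\,\id - i\sin\alpha\,H$ is available, which is precisely why the argument requires the idempotency hypothesis $H^2=\id$.
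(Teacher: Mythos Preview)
Your proposal is correct and follows essentially the same route as the paper: compute the derivative via the generator to obtain the commutator $-\tfrac{i}{2}[H,\cdot]$, use $H^2=\id$ to write $e^{-i\alpha H}=\cos\alpha\,\id - i\sin\alpha\,H$, and then verify algebraically that the difference of the $\pm\tfrac{\pi}{2}$-shifted conjugations reproduces that commutator. The only cosmetic difference is that the paper keeps the commutator inside the $U(\theta_i)$ conjugation from the start (writing $\partial_{\theta_i}\bigl(U\rho U^\dagger\bigr)=U(-i[H,\rho])U^\dagger/2$ and then identifying $-i[H,\rho]$ with $U(\tfrac{\pi}{2})\rho U^\dagger(\tfrac{\pi}{2})-U(-\tfrac{\pi}{2})\rho U^\dagger(-\tfrac{\pi}{2})$), whereas you pull the commutator outside using $[H,U\rho U^\dagger]=U[H,\rho]U^\dagger$; the computations are otherwise identical.
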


\begin{proof}
The derivatives of the unitary operator and its conjugate are given by $\pdv{\theta_i} U(\theta_i) = -i U(\theta_i)H/2$, and $\pdv{\theta_i} U^\dagger(\theta_i) = i HU^\dagger(\theta_i)/2$.
As a result, for any operator $\rho$,
\begin{equation}
    \begin{aligned}
        \pdv{\theta_i} U(\theta_i) \rho U^\dagger(\theta_i)
        =& \qty(\pdv{\theta_i}U(\theta_i)) \rho U^\dagger(\theta_i) + U(\theta_i) \rho \qty(\pdv{\theta}U^\dagger(\theta_i)) \\
        =& -iU(\theta_i)H \rho U^\dagger(\theta_i)/2 + iU(\theta_i) \rho H U^\dagger(\theta_i)/2 \\
        =& U(\theta_i) (-i[H,\rho]) U^\dagger(\theta_i)/2 \\
    \end{aligned}
\end{equation}
By matching the coefficients of the Taylor series, it can be shown that $U(\theta_i) = \cos(\theta_i/2)\mathbb{I} - i\sin(\theta_i/2) H$, and
\begin{equation}
\begin{aligned}
    -i[H,\rho] =& -iH\rho + i\rho H\\
    =& \frac{(\mathbb{I} - iH)\rho (\mathbb{I} + iH) - (\mathbb{I} +iH)\rho (\mathbb{I}-iH)}{2} \\
    =& e^{-\frac{i\pi}{4}H} \rho e^{\frac{i\pi}{4}H} - e^{\frac{i\pi}{4}H} \rho e^{-\frac{i\pi}{4}H} \\
    =& U(\frac{\pi}{2})\rho U^\dagger(\frac{\pi}{2}) - U(-\frac{\pi}{2})\rho U^\dagger(-\frac{\pi}{2}) \ .
\end{aligned}
\end{equation}
In summary, the derivative of the operator
\begin{equation}\label{eq:ParameterShiftRule}
    \begin{aligned}
        \pdv{\theta_i} U(\theta_i) \rho U^\dagger(\theta_i)
        =& \frac{U(\theta_i+\frac{\pi}{2}) \rho U^\dagger(\theta_i+\frac{\pi}{2})
        - U(\theta_i-\frac{\pi}{2}) \rho U^\dagger(\theta_i-\frac{\pi}{2})}{2} \\
    \end{aligned}
\end{equation}
The proposition is the direct result of \eqref{eq:ParameterShiftRule}
\end{proof}

\subsection{Parameter Shift Rule for $\csep$}
Here we show that a similar parameter shift rule can be derived for our cost function $\csep$.
Specifically, the derivative $\pdv{\theta_i} \csep$ can be evaluated by four terms, each of which can be evaluated by the same quantum circuit for evaluating $\csep$.
To simplify the notation in the following discussion, let $\UCh_{\theta_i}$ denote the family of unitary channels that $\UCh_{\theta_i}(\cdot) = U(\theta_i) \cdot U^\dagger(\theta_i)$.
The parameter shift rule~\eqref{eq:ParameterShiftRule} can be rewritten as
\begin{equation}
    \pdv{\theta_i} \UCh_{\theta_i}(\rho) = \frac{\UCh_{\theta_i+\frac{\pi}{2}}(\rho)-\UCh_{\theta_i-\frac{\pi}{2}}(\rho)}{2} \ .
\end{equation}

Our cost function is defined by the expectation value of an observable $O$
\begin{equation}
    \csep(W) = \tr[W^{\otimes 2} \rho {W^\dagger}^{\otimes 2} S] \ ,
\end{equation}
where the initial state $\rho = \tau_A \otimes \tau_B$, and $W$ is a quantum circuit parameterized by $\va*{\theta}$.
In the design of the ansatz, for each parameter $\theta_i$, there is only one Pauli rotation gate that depends on $\theta_i$ in $W(\va*{\theta})$.
We may split the circuit $W$ into three parts as in figure~\ref{fig:SplitUnitary}, where $W=PU(\theta_i)Q$.
All gates before $U(\theta_i)$ are collected in $Q$, and all gates after $U(\theta_i)$ are collected in $P$.
Then
\begin{equation}
    \begin{aligned}
    W^{\otimes 2} =& (PU(\theta_i)Q)^{\otimes 2} \\
    =& P^{\otimes 2} U(\theta_i)^{\otimes 2} Q^{\otimes 2} \\
    =& P^{\otimes 2} (U(\theta_i)\otimes \mathbb{I})(\mathbb{I}\otimes U(\theta_i)) Q^{\otimes 2} \ .
    \end{aligned}
\end{equation}
Let $\UCh^\prime_{\theta_i}$ and $\UCh^{''}_{\theta_i}$ denote the unitary channels corresponding to $U(\theta_i)\otimes \mathbb{I}$ and $\mathbb{I}\otimes U(\theta_i)$ respectively.
And let $\mathcal{P}, \mathcal{Q}$ denote the unitary channels corresponding to $P^{\otimes 2}$ and $Q^{\otimes 2}$.
Then the unitary evolution can be written as the composition of $4$ unitary channels:
\begin{equation}
    \csep(W) = \tr[(\mathcal{P}\circ \UCh^{'}_{\theta_i}\circ \UCh^{''}_{\theta_i}\circ \mathcal{Q}) (\rho) O] \ .
\end{equation}

\begin{figure}[htp]
    \centering
    \includegraphics[width=.45\textwidth]{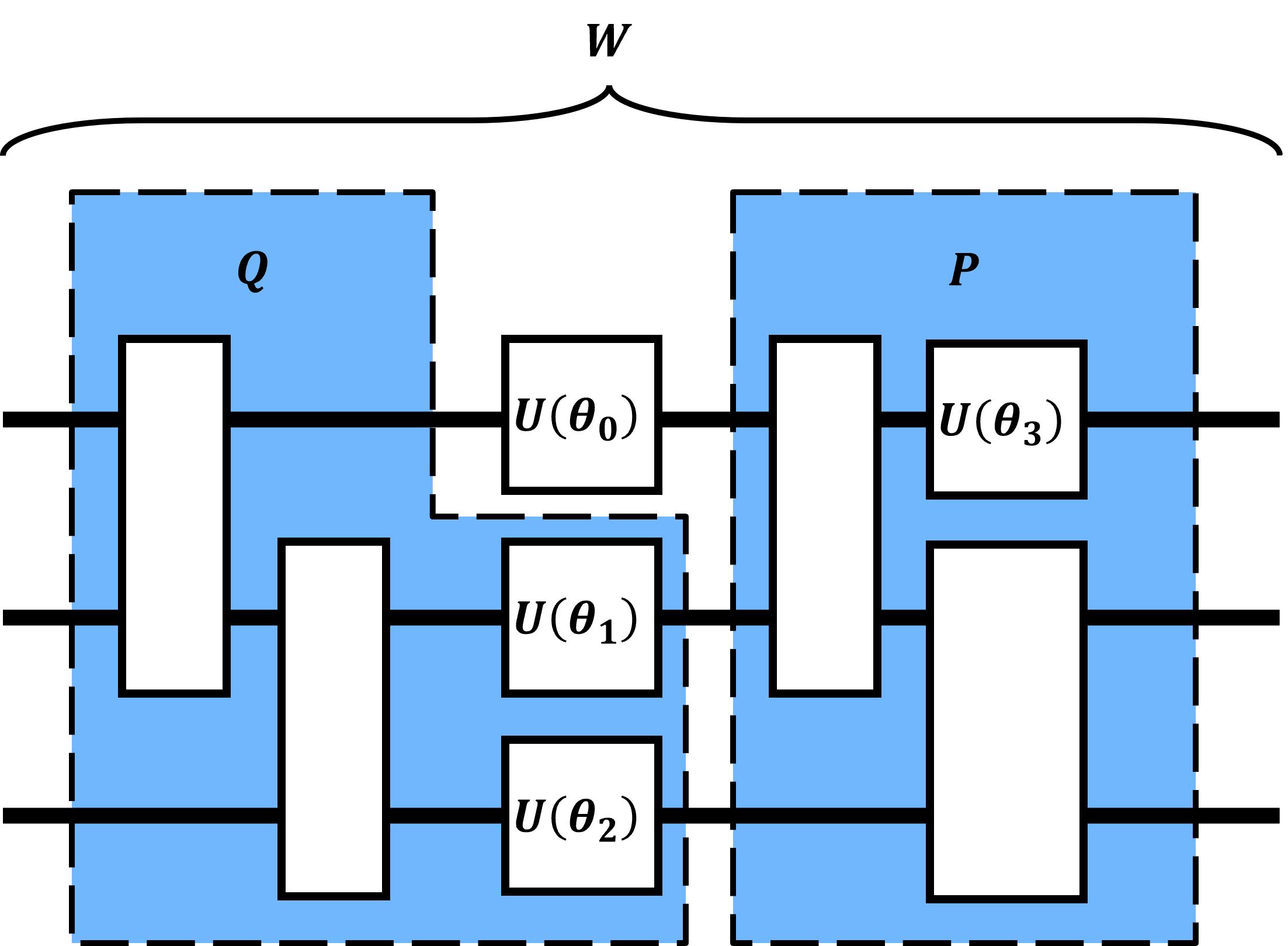}
    \caption{For each parameter $\theta_i$, the parameterized circuit $W$ can be split into three parts, where $W=PU(\theta_i)Q$.
    As there is only one Pauli rotation gate $U(\theta_i)$ that depends on $\theta_i$, all gates before $U(\theta_i)$ are collected in $Q$, and all gates after $U(\theta_i)$ are collected in $P$.
    For convenience, we put all gates parallel to $U(\theta_i)$ into $Q$.}
    \label{fig:SplitUnitary}
\end{figure}

According to the product rule of derivatives,
\begin{equation}
    \pdv{\theta_i} (\UCh^\prime_{\theta_i}\circ \UCh^{''}_{\theta_i})
    = (\pdv{\theta_i} \UCh^\prime_{\theta_i})\circ \UCh^{''}_{\theta_i}
    + \UCh^\prime_{\theta_i} \circ (\pdv{\theta_i} \UCh^{''}_{\theta_i}) \ .
\end{equation}
As the Paul rotations $U(\theta_i)$ are generated by idempotent operators and satisfies the parameter shift rule~\eqref{eq:ParameterShiftRule}, the same property holds for $\UCh^\prime_{\theta_i}$ and $\UCh^{''}_{\theta_i}$.
In conclusion, the derivative of our cost function is
\begin{equation}
    \begin{aligned}
    \pdv{\theta_i} \csep(W) =& \tr[(\mathcal{P}\circ \pdv{\theta_i} \qty(\UCh^{'}_{\theta_i}\circ \UCh^{''}_{\theta_i})\circ \mathcal{Q}) (\rho) O] \\
    =& \tr[(\mathcal{P}\circ \qty(\frac{\UCh^{'}_{\theta_i+\frac{\pi}{2}}-\UCh^{'}_{\theta_i-\frac{\pi}{2}}}{2})\circ \UCh^{''}_{\theta_i}\circ \mathcal{Q}) (\rho) O]
    + \tr[(\mathcal{P}\circ \UCh^{\prime}_{\theta_i}\circ \qty(\frac{\UCh^{''}_{\theta_i+\frac{\pi}{2}}-\UCh^{''}_{\theta_i-\frac{\pi}{2}}}{2})\circ \mathcal{Q}) (\rho) O] \\
    =& \frac{1}{2}\tr[(\mathcal{P}\circ \UCh^{'}_{\theta_i+\frac{\pi}{2}}\circ \UCh^{''}_{\theta_i}\circ \mathcal{Q}) (\rho) O]
    - \frac{1}{2}\tr[(\mathcal{P}\circ \UCh^{'}_{\theta_i-\frac{\pi}{2}}\circ \UCh^{''}_{\theta_i}\circ \mathcal{Q}) (\rho) O] \\
    &+ \frac{1}{2}\tr[(\mathcal{P}\circ \UCh^{\prime}_{\theta_i}\circ \UCh^{''}_{\theta_i+\frac{\pi}{2}}\circ \mathcal{Q}) (\rho) O]
    - \frac{1}{2}\tr[(\mathcal{P}\circ \UCh^{\prime}_{\theta_i}\circ \UCh^{''}_{\theta_i-\frac{\pi}{2}}\circ \mathcal{Q}) (\rho) O] \ .
    \end{aligned}
\end{equation}
This is the sum of $4$ terms, each of which can be evaluated by the same method of evaluating $\csep(W)$ with a ``parameter shift''.

\section{Relating decoupling cost to the gate fidelity}

\subsection{Characteristics of the gate fidelity}

The closeness of approximating $U$ with $U_A\otimes U_B$ is measured by the gate fidelity between $U$ and $U_A\otimes U_B$, which is defined by
\begin{equation}
    \bar{F} := \int \dd\psi \tr[(U_A\otimes U_B)^\dagger U \ketbra{\psi}{\psi} U^\dagger (U_A\otimes U_B)\ketbra{\psi}{\psi}] \ .
\end{equation}
This Haar average can be contracted to a single quantity that
\begin{equation}
    \bar{F} = \frac{1}{d+1} + \frac{1}{d(d+1)} \abs{\langle U_A\otimes U_B, U\rangle}^2 \ ,
\end{equation}
where $d$ is the dimension of the global Hilbert space.
Here $\langle U_A\otimes U_B, U\rangle := \Tr[(U_A\otimes U_B)^\dagger U]$ denotes the Hilbert-Schmidt inner product between $U_A\otimes U_B$ and $U$.
Note that, the average fidelity is maximized when $\langle U_A\otimes U_B, U\rangle$ is maximized.
Let $\UCh$ be the unitary channel that $\UCh(\cdot) = U(\cdot)U^\dagger$, and let $\UCh_A$ and $\UCh_B$ be the corresponding channels for $U_A$ and $U_B$.
Denote the Choi-Jamiolkowski matrix of a channel $\UCh$ by $J^\UCh$.
Then the inner product can be expressed as
\begin{equation}
    \abs{\langle U_A\otimes U_B, U\rangle}^2 = \Tr[(J^{\UCh_A}\otimes J^{\UCh_B}) J^\UCh] \ .
\end{equation}
As the Choi rank of unitary channels are all $1$, we may rewrite the Choi operators as the outer products
\begin{equation}
    J^\UCh = \ketbra{\psi}{\psi}, \quad J^{\UCh_A} = \ketbra{\psi_A}{\psi_A}, \quad J^{\UCh_B} = \ketbra{\psi_B}{\psi_B} \ .
\end{equation}
Then
\begin{equation}
    \abs{\langle U_A\otimes U_B, U\rangle}^2 = \abs{(\bra{\psi_A}\otimes \bra{\psi_B})\ket{\psi}}^2 \ .
\end{equation}
As $\ket{\psi}$ is a vector in the Hilbert space, there exists a Schmidt decomposition such that
\begin{equation}\label{eq:SchmidtFid}
    \ket{\psi} = \sum_k \sqrt{p_k} \ket{\xi_k}\otimes \ket{\phi_k} \ ,
\end{equation}
where $\sum_k p_k = \braket{\psi} = d$ according to the normalization condition for quantum channels.
As the result, the maximum value of $\abs{\langle U_A\otimes U_B, U\rangle}^2$ is given by
\begin{equation}
    \max_{U_A,U_B} \abs{\langle U_A\otimes U_B, U\rangle}^2 = \max_{\ket{\psi_A},\ket{\psi_B}} \abs{\sum_k \sqrt{p_k} \braket{\psi_A}{\xi_k}\otimes \braket{\psi_B}{\phi_k}}^2 \ .
\end{equation}
Let $q_{A,k}:= \braket{\psi_A}{\xi_k}$ and $q_{B,k}:= \braket{\psi_B}{\phi_k}$, we have
\begin{equation}
\begin{aligned}
    \abs{\langle U_A\otimes U_B, U\rangle} =& \abs{\sum_k \sqrt{p_k} q_{A,k} q_{B,k}} \\
    \leq& \sum_k \sqrt{p_i}\abs{q_{A,k} q_{B,k}} \\
    \leq& \max{\sqrt{p_i}} \sum_k \abs{q_{A,k} q_{B,k}} \\
    =& \max{\sqrt{p_i}} \abs{\braket{\abs{p_A}}{\abs{q_B}}} \\
\end{aligned}
\end{equation}
According to the Cauchy-Schwartz inequality,
\begin{equation}
    \abs{\braket{\abs{q_A}}{\abs{q_B}}}^2 \leq \norm{q_A}_2^2 \norm{q_B}_2^2 \ .
\end{equation}
Given $\ket{\xi_k}$ and $\ket{\phi_k}$ are orthonormal bases, $\norm{q_A}_2^2 = \abs{\braket{\psi_A}}^2 = d_A$ and $\norm{q_B}_2^2=d_B$.
As above inequalities hold for all $\ket{\psi_A}, \ket{\psi_B}$,
\begin{equation}
    \max_{U_A,U_B} \abs{\langle U_A\otimes U_B, U\rangle}^2 \leq \max{q_i} d_A d_B = d \norm{p}_\infty \ .
\end{equation}
where the infinity norm $\norm{p}_\infty$ equals to the maximum Schmidt coefficient of $\ket{\psi}$.
The equality holds when $\ket{\psi_A}$ and $\ket{\psi_B}$ are proportional to the $\ket{\xi_k}$ and $\ket{\phi_k}$ that corresponding to the maximum $p_k$.
However, this might not be achievable as $\ket{\xi_k}$ and $\ket{\phi_k}$ do not necessarily correspond to unitary operators.

In summary, for all $U_A\otimes U_B$, the gate fidelity is bounded by
\begin{equation}
    \frac{1}{d+1} \leq \bar{F}(U_A\otimes U_B, U) \leq \frac{1}{d+1} (\norm{p}_\infty + 1) \leq 1 \ .
\end{equation}
In other words
\begin{equation}
    \bar{F}_{\max}:= \max_{U_A,U_B} \bar{F}(U_A\otimes U_B, U) \leq \frac{1}{d+1} (\norm{p}_\infty + 1)
\end{equation}

To fully understand the property of separability, we would like to explore the gate fidelity with respect to an alternative set.
Without loss of generality, assuming the dimension of the system $A$, $d_A$, is not larger than $d_B$.
Then we may extend the Hilbert space of $A$ to $A^\prime$, whose dimension is the same as $B$, such that $A^\prime = A\oplus C$, where $C$ is a $d_B-d_A$ dimensional Hilbert space.
We may extend $U$ to construct a unitary operator in $A^+$, such that $U^\prime = U\oplus \mathbb{I}_{B\otimes C}$.
As the dimensions of $A^+$ and $B$ are the same, we may define the swap operator $S$ between $A^+$ and $B$.
Now, we will focus on the gate fidelity
\begin{equation}\label{eq:extendedU}
    \max_{U_{A^+},U_B} \bar{F}(U_{A^+}\otimes U_B, SU^\prime) \ .
\end{equation}
Let $\qty{q_k}$ be the Schmidt coefficients of the Choi operator with respect to $SU^\prime$.
Then, with similar arguments as above, we have
\begin{equation}
    \bar{F}_{\max}^S:= \max_{U_{A^+},U_B} \bar{F}(U_{A^+}\otimes U_B, SU^\prime) \leq \frac{1}{d_B^2+1} (\norm{q}_\infty +1)  \ .
\end{equation}
Since $d_B^2 \geq d_A d_B = d$ by assumption,
\begin{equation}
    \bar{F}_{\max}^S \leq \frac{1}{d+1} (\norm{q}_\infty +1)  \ .
\end{equation}
This quantity would be useful in the analysis of the decoupling cost.

\subsection{Characteristics of the decoupling cost}

In this section, we show that the maximum gate fidelity $\bar{F}_{\max}$ is related to the decoupling cost function $\csep$, such that
\begin{equation}
    \bar{F}_{\max} \leq 1- \csep + \frac{3}{d+1} \ .
\end{equation}

The evaluation of $\csep$ requires two copies of a target unitary $U$, whose acts on the system $A\otimes B$.
We may label the systems as $A_1,A_2,B_1,B_2$, where one unitary acts on $A_1\otimes B_1$, and the other acts on $A_2 \otimes B_2$.
In terms of the Choi-Jamiolkowski operator, the average linear entropy, denoted $\bar{L}$, is given by
\begin{equation}
    \begin{aligned}
    &1 - \frac{\tr[\UCh^{\otimes 2} (\tau_A\otimes \tau_B) S_A \otimes \mathbb{I}_B] + 
    \tr[\UCh^{\otimes 2} (\tau_A\otimes \tau_B)\mathbb{I}_A \otimes S_B]}{2} \\
    =& 1 - \frac{\tr[\tau_A \otimes \tau_B\otimes S_{A^\prime} \otimes \mathbb{I}_{B^\prime} (J^{\UCh}\otimes J^{\UCh})]}{2}
    - \frac{\tr[\tau_A \otimes \tau_B\otimes \mathbb{I}_{A^\prime} \otimes S_{B^\prime} (J^{\UCh}\otimes J^{\UCh})]}{2} \ ,
    \end{aligned}
\end{equation}
where $S_A$ ($S_B$) is the swap operator between the output systems $A_1$ and $A_2$, (or $B$ systems respectively), and the identity operators $\mathbb{I}_A, \mathbb{I}_B$ are also defined on the system $A_1 \otimes A_2$ or $B_1\otimes B_2$ respectively.
This equality comes from theorem~\ref{thm:Choi}, where the partial transposes from the formula are omitted, as $\tau_k, \mathbb{I}_k$ and $S_k$ are all invariant under the transpose.
In this Choi representation, we need to distinguish the output systems and input systems, and thus there are $8$ subsystems in the formula in total.
We may label the subsystems as $A_1,A_2,B_1,B_2$ for the input systems, and $A_1^\prime,A_2^\prime,B_1^\prime,B_2^\prime$ for the output systems.
Note that, the initial state $\tau_k = \frac{P_s}{\tr[P_s]} = \frac{1}{2\tr[P_s]}(S_k + \mathbb{I})$, and the trace of the projector $tr[P_S]$ is $\frac{d_A^2+d_A}{2}$ for the pair of systems $(A_1, A_2)$ and $\frac{d_B^2+d_B}{2}$ for the pair of systems $(B_1, B_2)$, where $d_A,d_B$ are the dimension of system $A_i, B_i$ respectively.
Then each of the two terms in $2(1-\bar{L})$ is divided into $4$ terms:
\begin{equation}\label{eq:8terms}
\begin{aligned}
    &\tr[\tau_A \otimes \tau_B\otimes S_{A^\prime} \otimes \mathbb{I}_{B^\prime} (J^{\UCh}\otimes J^{\UCh})]
    + \tr[\tau_A \otimes \tau_B\otimes \mathbb{I}_{A^\prime} \otimes S_{B^\prime} (J^{\UCh}\otimes J^{\UCh})] \\
    =& \frac{1}{(d_A^2+d_A)(d_B^2+d_B)}\Big(\tr[\mathbb{I}_A \otimes \mathbb{I}_B \otimes S_{A^\prime} \otimes \mathbb{I}_{B^\prime} (J^{\UCh}\otimes J^{\UCh})] \\
    &+ \tr[S_A \otimes \mathbb{I}_B \otimes S_{A^\prime} \otimes \mathbb{I}_{B^\prime} (J^{\UCh}\otimes J^{\UCh})]
    + \tr[\mathbb{I}_A \otimes S_B \otimes S_{A^\prime} \otimes \mathbb{I}_{B^\prime} (J^{\UCh}\otimes J^{\UCh})] \\
    &+ \tr[S_A \otimes S_B \otimes S_{A^\prime} \otimes \mathbb{I}_{B^\prime} (J^{\UCh}\otimes J^{\UCh})]
    +\tr[\mathbb{I}_A \otimes \mathbb{I}_B \otimes \mathbb{I}_{A^\prime}\otimes S_{B^\prime}  (J^{\UCh}\otimes J^{\UCh})] \\
    &+ \tr[S_A \otimes \mathbb{I}_B \otimes \mathbb{I}_{A^\prime}\otimes S_{B^\prime}  (J^{\UCh}\otimes J^{\UCh})]
    + \tr[\mathbb{I}_A \otimes S_B \otimes \mathbb{I}_{A^\prime}\otimes S_{B^\prime}  (J^{\UCh}\otimes J^{\UCh})] \\
    &+ \tr[S_A \otimes S_B \otimes \mathbb{I}_{A^\prime}\otimes S_{B^\prime}  (J^{\UCh}\otimes J^{\UCh})]\Big) \ .
\end{aligned}
\end{equation}
For convenience, we label each term in the parenthesis as $P_k$ in the order presented in \eqref{eq:8terms}, where $k$ ranges from $1$ to $8$.
That is
\begin{equation}
    2(1-\bar{L}) = \frac{\sum_k P_k}{(d_A^2+d_A)(d_B^2+d_B)} \ .
\end{equation}%
By changing the order of the tensor products and grouping the swap operators, we may rewrite each $P_k$ in the form of $\tr[S_\alpha \otimes \mathbb{I}_\beta (J_{\alpha_1,\beta_1}\otimes J_{\alpha_2,\beta_2})]$ for some partitioning $(\alpha,\beta)$ of the four pairs the systems $(A_1,A_2)$, $(B_1,B_2)$, $(A^\prime_1,A^\prime_2)$, $(B^\prime_1,B^\prime_2)$.
This formula is equivalent to
\begin{equation}
    \tr[S_\alpha \tr_{\beta}[J_{\alpha_1,\beta_1}]\otimes \tr_{\beta}[J_{\alpha_2,\beta_2}]] \ .
\end{equation}
Given $J^{\UCh}$ is a rank-$1$ operator, we can write it in its Schmidt decomposition $\ket{\psi} = \sum_k \sqrt{c_k}\ket{\varphi_k}_\alpha\otimes \ket{\varepsilon_k}_\beta$.
After the partial trace of the $\beta$ systems,
\begin{equation}
    \tr[J_{\alpha_1,\beta_1}]\otimes \tr[J_{\alpha_2,\beta_2}]
    = \qty(\sum_k c_k \ketbra{\varphi_k}{\varphi_k})\otimes \qty(\sum_l c_l \ketbra{\varphi_l}{\varphi_l}) \ .
\end{equation}
Then, by applying the swap and the trace
\begin{equation}
    \tr[S_\alpha \tr_{\beta}[J\otimes J]] =
    \sum_{k,l} c_k c_l \braket{\varphi_l}{\varphi_k} \braket{\varphi_k}{\varphi_l}
    = \norm{c}_2^2 \ .
\end{equation}
This result indicates that each term of the $8$ terms is the 2-norm of the Schmidt coefficients of $\ket{\psi}$ for some partitioning $(\alpha,\beta)$ of the system $J^\UCh$ acting on.
That is saying, every term $P_k$ is positive and upper bounded by the inner product of $\ket{\psi}$, which is $d$.

Now, we need to analyze each of the terms individually.
First notice that each of the terms is the $2$-norm of the Schmidt coefficients of $\ket{\psi}$, which only depends on the partitioning of the system.
Specifically,
\begin{equation}
    \tr[S_\alpha \tr_{\beta}[J\otimes J]] = \norm{c}_2^2 = \tr[S_\beta \tr_{\alpha}[J\otimes J]] \ .
\end{equation}
As the result, by interchanging the subsystems $\alpha$ and $\beta$, the $8$ terms can be grouped into $4$ terms:
\begin{equation}
    P_1 = P_8;\quad P_2 = P_7;\quad P_3 = P_6;\quad P_4 = P_5 \ .
\end{equation}

$P_2$ is strongly related to the gate fidelity between $U$ and $U_A\otimes U_B$.
For $P_2$, the partition $(\alpha,\beta)$ matches the partition $(A,B)$ in the analysis of the fidelity as in \eqref{eq:SchmidtFid}, which means $P_2 = \norm{p}_2^2$.

$P_1$ and $P_5$ are equivalent to the expectation value of $S_A\otimes \mathbb{I}_B$ and $\mathbb{I}_A\otimes S_B$ respectively with completely mixed state as input.
That is
\begin{equation}
    \begin{aligned}
    P_1 &= \tr[(S_A\otimes \mathbb{I}_B) \UCh(\mathbb{I})\otimes\UCh(\mathbb{I})] \\
    &= \tr[(S_A\otimes \mathbb{I}_B) \mathbb{I}\otimes\mathbb{I}] \\
    &= \tr[(S_A\otimes \mathbb{I}_B)] \\
    &= d_A d_B^2 \ ,
    \end{aligned}
\end{equation}
where $d_A$ and $d_B$ are the dimensions of the subsystems $A_i$ and $B_i$ respectively.
Similarly, $P_5 = d_A^2 d_B$.

In summary,
\begin{equation}
    \begin{aligned}
    & 1-\bar{L} =&\frac{\norm{p}_2^2 + d_Ad_B^2 + d_Bd_A^2 + P_3}{(d_A^2+d_A)(d_B^2+d_B)} \ .
    \end{aligned}
\end{equation}
As the dimension of the whole system $A\otimes B$ is $d = d_A d_B$, we have
\begin{equation}
    \bar{L} = -\frac{\norm{p}_2^2 - d^2 + P_3 -d}{d(d+1+d_A+d_B)} \ .
\end{equation}
Which means
\begin{equation}\label{eq:p2norm}
    \norm{p}_2^2 = d^2 - P_3 +d -d(d+1+d_A+d_B)\bar{L} \ .
\end{equation}

Now, we will substitute the definition $\csep = \frac{d_A^2}{d_A^2-1}\bar{L}$ to \eqref{eq:p2norm}, where we assume $d_A \leq d_B$.
Let $K:= \frac{d_A^2-1}{d_A^2}$, then $\csep = \frac{\bar{L}}{K}$.
\begin{equation}
    \norm{p}_2^2 = d^2+d- P_3 - K\csep d(d+1+d_A+d_B) \ .
\end{equation}

Compared to the gate fidelity that
\begin{equation}
    \begin{aligned}
    \bar{F}_{\max} \leq& \frac{1}{d+1} + \frac{1}{d+1}\norm{p}_\infty \\
    \leq&\frac{1}{d+1} + \frac{1}{d+1}\norm{p}_2 \\
    = & \frac{1}{d+1} + \frac{1}{d+1}\sqrt{d^2+d-P_3 - K\csep d(d+1+d_A+d_B)} \\
    \leq & \frac{1}{d+1} + \frac{1}{d+1}\sqrt{d^2+d-P_3 - K\csep (d^2+d)} \\
    = & \frac{1}{d+1} + \sqrt{\frac{d}{d+1}}\sqrt{1-\frac{P_3}{d^2+d} - K\csep} \ .
    \end{aligned}
\end{equation}
We may organize the terms to get
\begin{equation}
    \qty(\bar{F}_{\max} - \frac{1}{d+1})^2 \leq  \frac{d}{d+1} \qty(1 - K\csep - \frac{P_3}{d^2+d}) \ .
\end{equation}
and
\begin{equation}
    \qty(\bar{F}_{\max} - \frac{1}{d+1})^2 + \frac{P_3}{(d+1)^2} \leq \frac{d}{d+1} \qty(1 - K\csep) \ .
\end{equation}

As $d = d_A d_B \geq d_A^2$ by assumption, we have, $K \leq \frac{d^2-1}{d^2}$, and the above equation can be further simplified to
\begin{equation}\label{eq:Bound}
    \begin{aligned}
    \qty(\bar{F}_{\max} - \frac{1}{d+1})^2 + \frac{P_3}{(d+1)^2}
    \leq& \frac{d}{d+1} \qty(1 - \frac{d^2-1}{d^2}\csep ) = \frac{d}{d+1} - \frac{d-1}{d}\csep  \ .
    \end{aligned}
\end{equation}
The quantity of $P_3$ here is related to the extended unitary $U^\prime$ as defined in \eqref{eq:extendedU}.
As $J^{\UCh^\prime} = J^{\UCh}\oplus J^{\mathbb{I}_{B\otimes C}}$ and $J^{\mathbb{I}_{B\otimes C}} = \ketbra{\Psi^+}_{B\otimes C}$, we have
\begin{equation}
    \begin{aligned}
    &\tr[S_{A^+} \otimes \mathbb{I}_B \otimes \mathbb{I}_{{A^+}^\prime}\otimes S_{B^\prime} (J^{\UCh^\prime}\otimes J^{\UCh^\prime})] \\
    =&\tr[S_{A} \otimes \mathbb{I}_B \otimes \mathbb{I}_{A^\prime}\otimes S_{B^\prime} (J^{\UCh}\otimes J^{\UCh})]
    +\tr[S_{C} \otimes \mathbb{I}_B \otimes \mathbb{I}_{C^\prime}\otimes S_{B^\prime} (J^{\mathbb{I}_{B\otimes C}}\otimes J^{\mathbb{I}_{B\otimes C}})] \\
    =&\tr[S_{A} \otimes \mathbb{I}_B \otimes \mathbb{I}_{A^\prime}\otimes S_{B^\prime} (J^{\UCh}\otimes J^{\UCh})]
    + \tr[S_C]\tr[S_{B^\prime}] \\
    =&\tr[S_{A} \otimes \mathbb{I}_B \otimes \mathbb{I}_{A^\prime}\otimes S_{B^\prime} (J^{\UCh}\otimes J^{\UCh})]
    + (d_B^2 - d) \\
    =& P_3 + (d_B^2 - d) \ ,
    \end{aligned}
\end{equation}
where the first equality comes from the fact that both $S_{A^+}$ and $\mathbb{I}_{A^+}$ are only supported on $A\otimes A$ and $C\otimes C$, but not $A\otimes C$ and $C\otimes A$, and hence the $J^{\UCh}\otimes J^{\mathbb{I}_{B\otimes C}}$ and $J^{\mathbb{I}_{B\otimes C}}\otimes J^{\UCh}$ terms are zero.
Observe that, similar to the analysis of $P_2$, the $P_3$ is the 2-norm of the Schmidt coefficients
\begin{equation}
    \begin{aligned}
    &\tr[S_{A^+} \otimes \mathbb{I}_B \otimes \mathbb{I}_{{A^+}^\prime}\otimes S_{B^\prime} (J^{\UCh^\prime}\otimes J^{\UCh^\prime})] \\
    =&\tr[S_{A^+} \otimes \mathbb{I}_B \otimes S_{{A^+}^\prime}\otimes \mathbb{I}_{B^\prime} (J^{S\UCh^\prime}\otimes J^{S\UCh^\prime})] \\
    =& \norm{q}_2^2 \ ,
    \end{aligned}
\end{equation}
where the output systems ${A^+}^\prime$ and $B^\prime$ are swapped in the first equality.
In summary, we have
\begin{equation}
    \begin{aligned}
    P_3 &= \norm{q}_2^2  - (d_B^2-d) \geq \norm{q}_\infty^2 - (d_B^2 -d) \\
    &\geq (d+1)^2 (\bar{F}^S_{\max} - \frac{1}{d+1})^2 - (d_B^2 -d) \ ,
    \end{aligned}
\end{equation}
Where $\bar{F}^S_{\max}$ is the maximum gate fidelity of the extended unitary $SU^\prime$ as defined in \eqref{eq:extendedU}.
Substitute this to \eqref{eq:Bound}, we have
\begin{equation}
    \qty(\bar{F}_{\max} - \frac{1}{d+1})^2 + \qty(\bar{F}^S_{\max} - \frac{1}{d+1})^2 \leq \frac{d^2+d_B^2}{(d+1)^2} - \frac{d-1}{d}\csep  \ .
\end{equation}
In the case $d_A = d_B$, we have $d = d_B^2$, and
\begin{equation}
    \begin{aligned}
    &\qty(\bar{F}_{\max} - \frac{1}{d+1})^2 + \qty(\bar{F}^S_{\max} - \frac{1}{d+1})^2 \leq \frac{d}{d+1} - \frac{d-1}{d}\csep \leq \qty(1-\frac{1}{d})(1-\csep)  \ ,
    \end{aligned}
\end{equation}
and the extended unitary $U^\prime$ is the same as $U$.
It shows that the maximum gate fidelities $\bar{F}_{\max}$ and $\bar{F}^S_{\max}$ are jointly bounded by $1-\csep$.

To obtain a neater but looser bound shown in theorem 2 in the main text, we start from \eqref{eq:Bound}, then observe that $P_3$ is positive, and $\bar{F}_{\max} \leq 1$, such that
\begin{equation}
    \bar{F}_{\max}^2 - \frac{2}{d+1} +\qty(\frac{1}{d+1})^2 \leq \qty(\bar{F}_{\max} - \frac{1}{d+1})^2 \leq \frac{d}{d+1} - \frac{d-1}{d}\csep  \ .
\end{equation}
By combining the factors, we have
\begin{equation}
    \begin{aligned}
    \bar{F}_{\max}^2 &\leq \frac{2d+1}{(d+1)^2} + \frac{d^2+d}{(d+1)^2} - \frac{d-1}{d}\csep \\
    &\leq \frac{d^2+3d+1}{(d+1)^2} - \frac{d-1}{d}\csep \\
    &\leq \frac{d^2+3d+1}{(d+1)^2} - \frac{d-1}{d+1}\csep \\
    &= \frac{d-1}{d+1}(1 - \csep) + \frac{3d+2}{(d+1)^2}  \\
    &\leq (1 - \csep) + \frac{3}{d+1}  \ ,
    \end{aligned}
\end{equation}
which proves theorem 2 in the main text.

\section{Extended Numerical Results}

In this section, we provide the extended numerical results for the main text.
In Fig.~\ref{fig:4qubit_sturcture}, we show that, if the 4-qubit unitary is generated by a circuit with a similar structure as our ansatz, our method can achieve a higher fidelity.

\begin{figure}
\centering
\includegraphics[width=0.5\columnwidth]{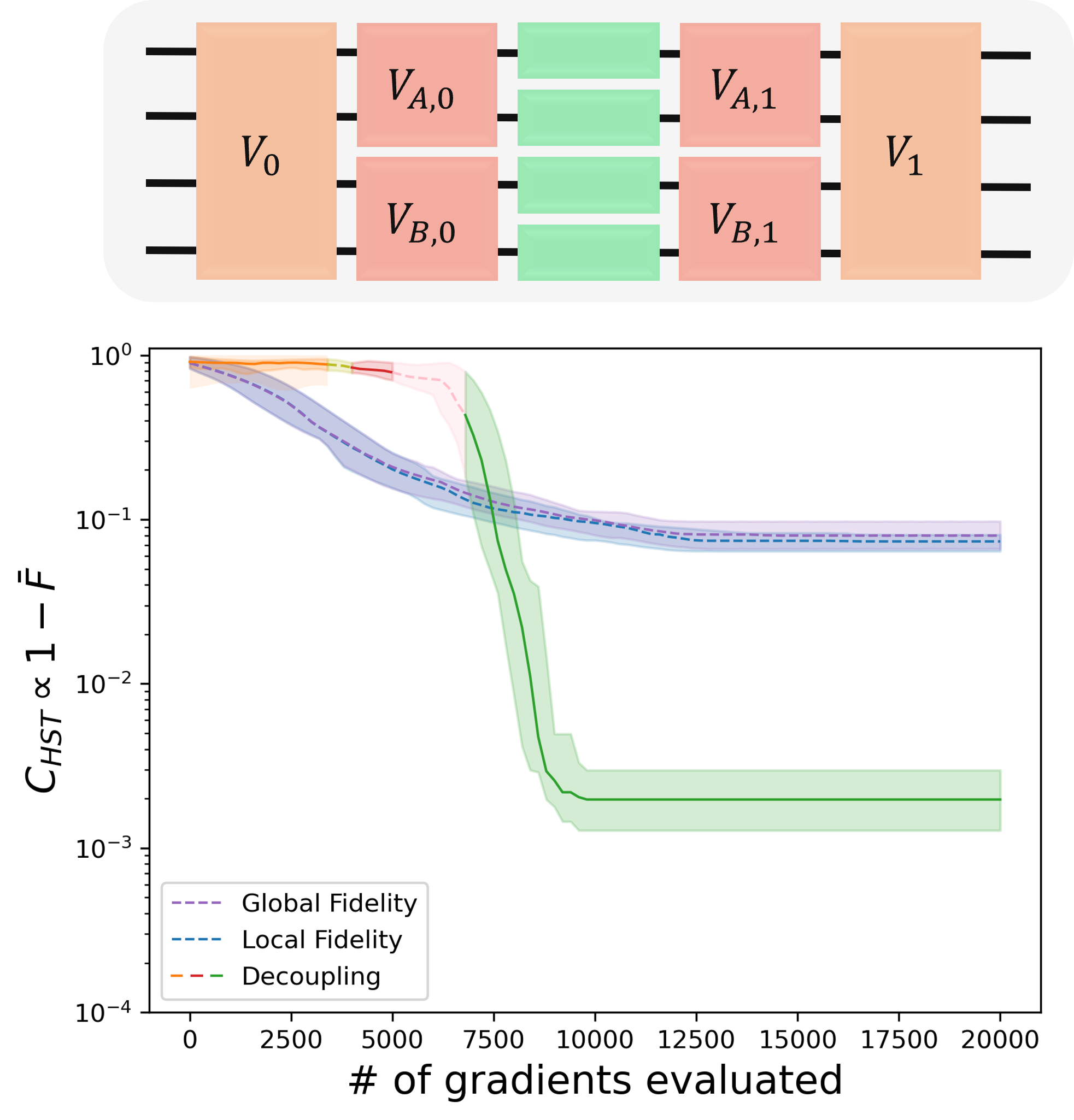}
\caption{\emph{Variational Compilation of 4-qubit unitary that is generated by a spindle-shaped circuit.}
In the main text, we compared our method with direct quantum compilation for Haar randomly generated 4-qubit unitary.
However, as our ansatz is shallow, it is not expected to compile any 4-qubit unitary with high fidelity.
Here we show that, if the 4-qubit unitary is generated by a circuit with a similar structure as our ansatz, our method can achieve a higher fidelity.
Similar to the ansatz in the main text, the circuit is divided into 3 types: 4-qubit ansatzes (orange), 2-qubit ansatzes (red), and the middle single qubit gates (green).
However, the ansatzes $V_0, V_1$, and $V_{A,0}, V_{B,0}, V_{A,1}, V_{B,1}$ are not repeated, and each of them only contains one layer of CNOT gates.
}
\label{fig:4qubit_sturcture}
\end{figure}

For a better understanding of our numerical results, we plot the raw data of Fig.~\ref{fig:4qubit_sturcture} explicitly in Fig.~\ref{fig:4qubit_sturcture_raw}.
The solid and dashed lines are the global fidelities, whose mean and the ranges are used to generate Fig.~\ref{fig:4qubit_sturcture}.
The dotted lines are the corresponding cost functions used at each phase of the optimization.

\begin{figure}[htp]
    \centering
    \includegraphics[width=0.5\textwidth]{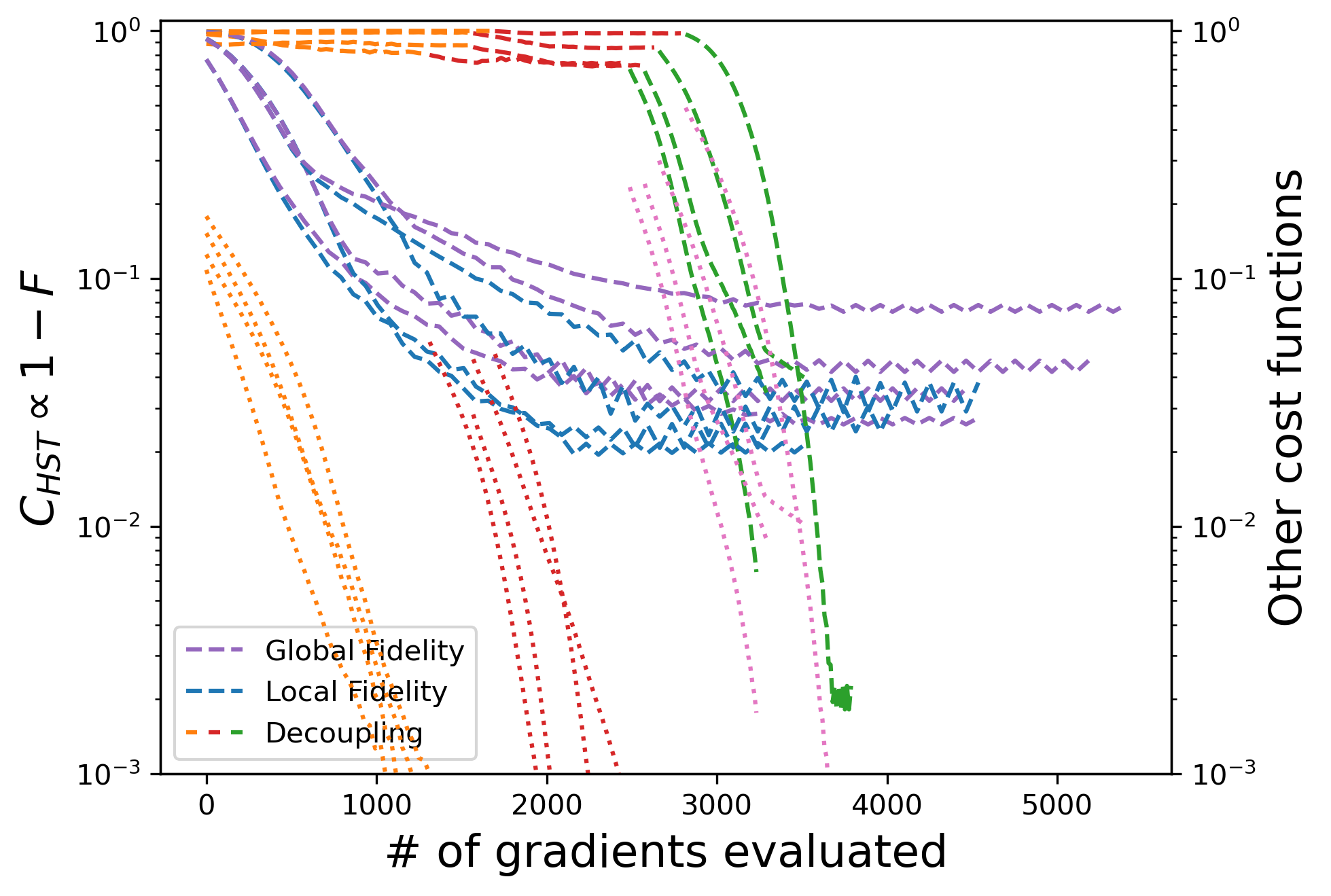}
    \caption{\emph{Raw data for the 4-qubit unitary generated by a spindle-shaped circuit.}}
    \label{fig:4qubit_sturcture_raw}
\end{figure}

\newpage

\bibliographystyle{unsrt}
\bibliography{main}

\end{document}